\definecolor{light-gray}{gray}{0.95}
\def\centerarc[#1](#2)(#3:#4:#5){\draw[#1] ($(#2)+({#5*cos(#3)},{#5*sin(#3)})$) arc (#3:#4:#5);}
\newtheorem{theorem}{Theorem}[section]
\newtheorem{lemma}[theorem]{Lemma}
\newtheorem{proposition}[theorem]{Proposition}
\newtheorem{corollary}[theorem]{Corollary}
\newtheorem{remark}[theorem]{Remark}
\newcommand{\bb}[1]{{\mathbb #1}}
\newcommand{\bs}[1]{{\boldsymbol #1}}
\newcounter{as}[section]
\newcommand{\R}{{\mathbb R}}
\newcommand{\bfk}{{\bs k}}
\newcommand{\bfx}{{\bs x}}
\newcommand{\bfX}{{\bs X}}
\newcommand{\bfZ}{{\bs Z}}
\newcommand{\Z}{{\bb Z}}
\newcommand{\abs}[1]{\left|#1 \right|}
\newcommand{\N}{{\mathbb N}}
\newcommand{\E}{{\mathbb{E}}}
\newcommand{\Prob}{{\mathbb{P}}}
\newcommand{\dd}{{\mathfrak{d}}}
\newcommand{\pa}[1]{\left(#1 \right)}
\newcommand{\0}{{\bf 0}}
\newcommand{\ind}[1]{{\bf 1}_{\{#1\}}}
\newcommand{\norm}[1]{\|#1\|}
\newcommand{\gene}{{\mathcal L}}
\newcommand{\gend}{L^{\dagger}}
\newcommand{\gendb}{\bar L^{\dagger}}
\newcommand{\BFL}{{\boldsymbol \Lambda_N}}
\newcommand{\lps}{\Lambda_p^*}
\newcommand{\cro}[1]{\left[#1\right]}
\newcommand{\br}[1]{\left\{#1\right\}}
\newcommand{\bfN}{{\bs{N}}}
\newcommand{\bfH}{\bs{H}}
\definecolor{dkgreen}{rgb}{0,0.6,0}
\definecolor{gray}{rgb}{0.5,0.5,0.5}
\definecolor{header}{gray}{0.3}
\title[Hydrodynamic limit of the SSEP with nonrev. boundary dynamics]{Hydrodynamic limit of boundary driven exclusion processes with nonreversible boundary dynamics}
\author{C. Erignoux}
\email{clement.erignoux@inria.fr}
\address{Equipe PARADYSE, Bureau B211
Centre INRIA Lille Nord-Europe
Park Plaza, Parc scientifique de la Haute-Borne, 40 Avenue Halley B\^atiment B, 59650 Villeneuve-d'Ascq
France}
\date{\today}
\thanks{\textbf{Acknowledgments~:} I would like to thank C. Landim for his help drafting this article, and for our numerous fruitful discussions.}
\begin{document}
\begin{abstract}
Using duality techniques, we derive the hydrodynamic limit for one-dimensional, 
boundary-driven, symmetric exclusion processes with
different types of non-reversible dynamics at the boundary, for which the classical entropy method fails.
\end{abstract}
\pagenumbering{Alph}
\begin{titlepage}
\maketitle
\thispagestyle{empty}
\end{titlepage}
\pagenumbering{arabic}

\section{Introduction}
Boundary-driven exclusion models can provide good examples of simple, solvable non-equilibrium models (see \cite{BD1} and references therein). 
Such processes can exhibit rich behavior, depending on the nature on the boundary dynamics selected. 
One key goal in the study of these models is the derivation for nonequilibrium models of the large deviation functionals which  
plays the role the entropy does for equilibrium models (cf. \cite{Onsager1}, \cite{Onsager2}). Although progress has been made for specific models \cite{BLS2002}, as a first step to achieve this program for general classes of models, it is necessary to study both the stationary 
state and the hydrodynamic behavior of such models.

In \cite{ELX2017}, we investigated the stationary state for three classes of one-dimensional dynamics, 
whose bulk dynamics is symmetric simple exclusion (SSEP), and driven out of equilibrium by non-reversible, 
non-conservative dynamics at the boundaries. 
Classical tools can in some cases be adapted to derive the hydrodynamic limit for boundary-driven one-dimensional models
for which the boundary dynamics is either reversible w.r.t. a product measure, or sped up or slowed down w.r.t. the bulk dynamics (cf. \cite{SlowBond}). 
However, to the best of our knowledge, no hydrodynamic limit has been derived for models whose boundary evolves 
on the same time scale as the bulk and whose dynamics is not reversible with respect to a product measure.

In this article, we expand on the results obtained in \cite{ELX2017}, and use duality techniques to derive the hydrodynamic limit for 
two of the three classes of models investigated in \cite{ELX2017}. Duality properties have been extensively used to derive hydrodynamic behaviors, 
in particular for SSEP dynamics (see \cite{DMP} and references therein). One central challenge in using duality to derive hydrodynamic limits lies in 
closing the discrete difference equations satisfied by the $n$-point correlation functions (e.g. \cite{DMFL} for the Glauber+Kawasaki dynamics). 
In the first class of dynamics studied in this article, the boundary Markov generator 
preserves polynomials of degree one and two in the configuration, and therefore the equations for the density and correlation fields are naturally closed. 
In the second class, particles are created and annihilated  at the left boundary 
at a rate which depends in a weak way on the local configuration at the left boundary (cf. \eqref{eq:Asslambda}). 
This ensures that the dual branching process ultimately dies, and that we are therefore indeed able to close the equations (cf. \cite{ELX2017}). In both cases, 
the bulk dynamics is symmetric simple exclusion, and the right boundary is in contact with a reservoir at density $\beta\in (0,1)$.

Note that, although the method used in this article applies to the third class of models investigated in \cite{ELX2017}, 
in which the left boundary dynamics is sped up by an extra factor $\ell_N\to\infty$, to derive the hydrodynamic limit
they also require $\ell_N$ to be at least of order $N$. 
In this case, however, an adaptation of the more classical \emph{entropy method} \cite{KL} can also be used, 
so that we do not consider this third class here. Because this article is based on duality argument, one main drawback of 
our method is that it is really mainly adapted to models with stirring dynamics in the bulk. Furthermore, the question of what happens when 
the creation/annihilation rate at the left boundary strongly depends on the configuration (i.e. when Assumption \eqref{eq:Asslambda} fails) remains open.
It is to be noted that more recently, a general theory has been developed for the use of duality \cite{GKRV} in the context of interacting particle systems, 
applied in particular to so called inclusion dynamics \cite{OR} and asymmetric exclusion models to derive hydrodynamic limits. 
In this article, however, we do not need such elaborate tools, mainly because our bulk dynamics, namely simple exclusion, 
yields fairly simple equations for the density and correlation fields.

This article is organized as follows; in Section \ref{sec:firstmodel}, we introduce the model, as well as the class of left boundary 
conditions to which our result applies. 
In Section \ref{sec:rho}, we carefully estimate the density at the left boundary, as well as the evolution of the density's gradient at each boundary.
In Section \ref{sec:phi}, we estimate the correlation function of the dynamics. We conclude in Section \ref{sec:Hydro} the proof of our main result using the estimates obtained in
Sections \ref{sec:rho} and \ref{sec:phi}.

\section{Notations and main results}
\label{sec:firstmodel}
\subsection{General notations}
Consider $\Lambda_N=\{1,\dots ,N-1\}$, and let $
\Omega_N=\{0,1\}^{\Lambda_N}.$ our the set of configurations on $\Lambda_N$. Elements of $\Omega_N$ will be denoted $\eta$, and for $j\in \Lambda_N$, $\eta_j=1$ (resp. $0$) is to be understood as site $j$ being occupied (resp. empty) in $\eta$. We study in this article a Markov chain on $\Omega_N$ whose generator can be written
\begin{equation}
\label{eq:DefLN}
L_N=N^2(L_{r,N}+L_{b,N}+L_{l,N}).
\end{equation}
The generator $L_{b,N}$ encompasses the bulk dynamics, symmetric simple exclusion, on $\Lambda_N$. More
precisely, for any function $f:\Omega_N\to\R$, and any configuration $\eta\in \Omega_N$,
\begin{equation*}
\label{eq:DefLbulk}
(L_{b,N}f)(\eta)=\sum_{k=1}^{N-2}\{f(\sigma^{k,k+1}\eta)-f(\eta)\},
\end{equation*}
where $\sigma^{k,l}\eta$ is the configuration obtained from $\eta$ by
swapping the occupation variables $\eta_k$, $\eta_l$, 
\[(\sigma^{k,l}\eta)_j=\begin{cases}
\eta_{l} & \mbox{ if } j=k\\
\eta_k & \mbox{ if } j=l\\
\eta_j & \mbox{ if } j\in \Lambda_N\setminus \{k,l\}
\end{cases}.\]

At both boundaries, the dynamics is put in contact with non-conservative dynamics. On the right, the dynamics is coupled to a reservoir at density $\beta\in(0,1)$
\begin{equation*}
\label{eq:DefLrb}
(L_{r,N}f)(\eta)=[\beta(1-\eta_{N-1})+(1-\beta)\eta_{N-1}]
\br{f\pa{\sigma^{N-1}\eta}-f(\eta)},
\end{equation*}
where for $k\in \Lambda_N$, $\sigma^k\eta$ is the configuration where
the state of site $k$ has been flipped,
\[(\sigma^k\eta)_j=\begin{cases}
1-\eta_k & \mbox{ if } j=k\\
\eta_j & \mbox{ if } j\in \Lambda_N\setminus \{k\}
\end{cases}.\]
Note that we choose at the right boundary a very simple dynamics (coupling with 
a large reservoir at equilibrium). However, our method still applies if the right boundary generator is chosen according to either of the two classes of dynamics introduced below.

Fix $p\in \N$, we denote $\lps=\{1,\dots,p\}$ the microscopic set that plays the role of left boundary for $\Lambda_N$. The left boundary generator is written $L_{l,N}=L_R+L_C+L_A$, where
\begin{equation}
\label{eq:DefLR}
(L_{R}f)(\eta)=\sum_{j\in \lps}r_j\cro{\alpha_j(1-\eta_j)+
\eta_j(1-\alpha_j)}\br{f\pa{\sigma^{j}\eta}-f(\eta)},
\end{equation}
\begin{equation}
\label{eq:DefLC}
(L_{C}f)(\eta)=\sum_{j\neq k\in \lps }c_{j,k}\cro{\eta_j(1-\eta_k)+\eta_k(1-\eta_j)}
\br{f\pa{\sigma^j\eta}-f(\eta)},
\end{equation}
\begin{equation}
\label{eq:DefLA}
(L_{A}f)(\eta)=\sum_{j\neq k\in \lps }a_{j,k}\cro{\eta_j\eta_k+(1-\eta_j)(1-\eta_k)}
\br{f\pa{\sigma^j\eta}-f(\eta)}
,\end{equation}
and $(r_j)_{j\in \lps}$, $(c_{j,k})_{j\neq k\in \lps}$, $(a_{j,k})_{j\neq k\in \lps}$  are non-negative constants. 
The $(\alpha_j)_{j\in\lps}$ are in $[0,1]$, and are the respective densities of each of the reservoirs linked to sites $1\leq j\leq p$. 
The $c_{j,k}$'s (resp. $a_{j,k}$'s) are to be understood as copy (resp. anticopy) rates, at which site $j$ takes the value 
(resp. the inverse of the value) of site $k$. The $r_j$'s are reservoir rates, at which site $j$ is updated according to a reservoir at density $\alpha_j$. 
Note that the stirring generator \eqref{eq:DefLbulk} occurs in $\Lambda_N$, therefore it also affects the left boundary $\lps$, as well as links $\lps$ with $\Lambda_N\setminus \lps=\{p+1, \dots, N-1\}$.
We prove in Lemma 3.3 of \cite{ELX2017} that, assuming 
\begin{equation}
\label{eq:assrates}\tag{A1}
\sum_{j\neq k\in \lps }a_{j,k}+\sum_{j\in \lps }r_j>0, 
\end{equation}
the generator $L_{l,N}+L_{b,p+1}$ (where $(L_{b,p+1}f)(\eta)=\sum_{k=1}^{p-1}\{f(\sigma^{k,k+1}\eta)-f(\eta)\}$ is the stirring generator limited to jumps in $\lps$) 
admits a unique invariant measure $\mu$ (which does not depend on $N$). We denote 
\begin{equation}
\label{eq:Defalpha}
\alpha:=\E_{\mu}(\eta_p).
\end{equation}
As investigated in \cite{ELX2017}, the non-conservative dynamics encoded in $L_{l,N}$ macroscopically behaves as a reservoir at density $\alpha$. 
We will not consider the case $\sum_{j\neq k\in \lps }a_{j,k}+\sum_{j\in \lps }r_j=0$, $\sum_{j\neq k\in \lps }c_{j,k}>0$, in which $L_{l,N}+L_{b,p}$ admits 
two degenerate stationary states respectively concentrated on the full and empty configurations. 

In this article, we will focus on the case 
\[\sum_{j\neq k\in \lps }a_{j,k}=0, \quad \mbox{ and } \quad   \sum_{j\in \lps }r_{j}>0.\] 
This is purely for convenience: the case $\sum_{j\neq k\in \lps }a_{j,k}>0$ offers no further difficulty w.r.t. the hydrostatic limit, so that the hydrodynamic 
limit in this case can be quite easily recovered from the present article and the tools introduced in \cite{ELX2017}. 

\medskip

Fix a smooth initial density profile
$\rho_0\in C^2([0,1])$, and denote by $\nu_N$ the
product measure on $\Omega_N $ close to the profile $\rho_0$
\begin{equation*}
\nu_N(\eta)=\prod_{k\in\Lambda_N}
\cro{\eta_k\rho_0(k/N)+(1-\eta_k)(1-\rho_0(k/N))}.
\end{equation*}

Let $D(\bb R_+,\Omega_N)$ the space of right-continuous functions
$\eta: \bb R_+ \to \Omega_N$ with left limits.  Denote by
$\Prob_{\nu_N}$ the distribution on $D(\bb R_+,\Omega_N)$ induced by the
process $\eta(t)$ started from $\nu_N$, and driven by the generator $L_N$. Expectation with respect to
$\Prob_{\nu_N}$ is denoted $\E_{\mu^N}$. We are now ready to state our main result.

\begin{theorem}
\label{thm:HydroGen}
Fix $T>0$, and assume that \eqref{eq:assrates} holds. For any continuous function $G:[0,1] \to \bb R$, and any $t\in [0,T]$
\begin{equation*}
\lim_{N\to \infty} \E_{\nu_N}\pa{\abs{\frac 1N \sum_{k\in \Lambda_N} G(k/N) \, \eta_k(t) -
		\int_{[0,1]} G(u) \bar \rho (t,u)\, du}}=0\;,
\end{equation*}
where $\bar \rho$ is the unique solution of the linear elliptic equation
\begin{equation}
\label{eq:Eqrhobar}
\begin{cases}
\partial_t\rho(t,u)=\Delta \rho(t,u)  & \mbox{ for any } (t,u)\in]0,T]\times]0,1[  \\
\rho(0,\cdot)=\rho_0(\cdot)&\\
\rho(t,0) \,=\, \alpha, \quad \rho(t,1) \,=\, \beta&  \mbox{ for any }  t\in]0,T] 
\end{cases}\end{equation}
where $\alpha$ was defined in \eqref{eq:Defalpha}.
\end{theorem}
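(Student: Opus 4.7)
The bulk dynamics $L_{b,N}$ is pure stirring and, under the first class assumption, the boundary generator $L_{l,N}$ preserves polynomials of degree one and two in $\eta$. Consequently the correlation functions $\rho_N^{(n)}(t,k_1,\dots,k_n) = \E_{\mu^N}\bigl[\eta_{k_1}(t)\cdots\eta_{k_n}(t)\bigr]$ satisfy a closed linear hierarchy at each level $n$, which is the duality handle announced in the introduction. My plan is to (i) derive a closed discrete PDE for the one-point function $\rho_N(t,k):=\rho_N^{(1)}(t,k)$ and show that it converges to the heat equation solution $\bar\rho$, then (ii) use the two-point correlation to show that $N^{-1}\sum_k G(k/N)\,\eta_k(t)$ concentrates on its mean, which upgrades the mean convergence to the $L^1$ convergence stated in the theorem.

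\textbf{One-point function (Section \ref{sec:rho}).} Applying $L_N$ to $\eta_k$ yields the discrete heat equation $\partial_t \rho_N(t,k) = (\Delta_N \rho_N)(t,k)$ for $p+1 \le k \le N-2$, together with boundary conditions: at the right end, $\rho_N(t,N-1)$ is forced toward $\beta$ by $L_{r,N}$; at the sites $j\in\lps$ the evolution is an autonomous affine system, coupled to the bulk through the single stirring bond $(p,p+1)$. The target boundary value on the left is $\alpha=\E_\mu(\eta_p)$, where $\mu$ is the invariant measure of the isolated left boundary from \cite{ELX2017}. The key estimate I expect to establish is $\rho_N(t,p)\to\alpha$, uniformly on time intervals bounded away from $0$: indeed, because only one edge couples the block $\lps$ of size $p=O(1)$ to the bulk, the marginal distribution on $\lps$ remains $O(1/N)$-close to $\mu$ once the short intrinsic relaxation time has elapsed. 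Combined with this boundary estimate (and an analogous control on the right), a standard comparison between the discrete and continuous Dirichlet heat semigroups gives
\[ \frac{1}{N}\sum_{k\in\Lambda_N} G(k/N)\,\rho_N(t,k)\;\longrightarrow\;\int_{[0,1]} G(u)\,\bar\rho(t,u)\,du. \]

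\textbf{Two-point function and conclusion (Section \ref{sec:phi}).} Set $\phi_N(t,k,\ell) = \E_{\mu^N}[\eta_k(t)\eta_\ell(t)] - \rho_N(t,k)\rho_N(t,\ell)$. Closure at the second level gives a linear evolution for $\phi_N$ driven by two stirring walks interacting only on the diagonal, with inhomogeneous source terms coming from the boundaries. Proving $\sup_{k\ne\ell}|\phi_N(t,k,\ell)|=o(1)$ (and ideally $O(1/N)$) via a maximum-principle/duality argument on this equation then yields $\mathrm{Var}\bigl(N^{-1}\sum_k G(k/N)\eta_k(t)\bigr)=o(1)$. Together with the mean convergence from the previous step, this is precisely the $L^1$ convergence in Theorem \ref{thm:HydroGen}.

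\textbf{Main obstacle.} The delicate point is the analysis of the left boundary. Since $L_{l,N}$ is non-reversible and its invariant measure $\mu$ is not explicit, neither the entropy method nor Dirichlet-form comparison is available. Proving $\rho_N(t,p)\to\alpha$ together with the companion gradient estimates must therefore rest on a perturbative spectral argument that exploits the finite size of $\lps$ against the weak, $O(1)$-strength coupling with the bulk through the single edge $(p,p+1)$, and must also handle the initial transient before the boundary block has equilibrated. The same $\mu$-based mechanism must then be propagated through the two-point hierarchy in Section \ref{sec:phi}, which is the second nontrivial technical ingredient.
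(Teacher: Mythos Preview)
Your overall architecture---close the one- and two-point functions by duality, prove $\rho_N(t,\cdot)\to\alpha$ at the left end, control the covariance, conclude by variance---is exactly the paper's. Two technical choices differ and are worth flagging. For the boundary density, the paper does not argue spectrally or via closeness of the $\lps$-marginal to $\mu$; it represents $\rho_N(t,p+1)$ through the dual one-particle random walk on $\bar\Lambda_N$ with absorbing cemetery states $\dd_j$, splits the trajectory into excursions away from $p+1$, and uses that each left excursion is absorbed with fixed probability $\pi>0$ to obtain the quantitative rate $|\rho_N(t,p+1)-\alpha|\le K N^{\varepsilon-1}/t$ (Proposition~\ref{lem:convrhop}). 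For the two-point function, the inhomogeneous forcing is not at the spatial boundary but on the \emph{diagonal}: on $D_N$ one has $\partial_t\varphi_N = N^2\boldsymbol\nabla_N\varphi_N - N^2\big(\rho_N(t,k{+}1)-\rho_N(t,k)\big)^2$. Beating the $N^2$ prefactor is the heart of Section~\ref{sec:phi}: the paper first proves the gradient bound $|\rho_N(t,k{+}1)-\rho_N(t,k)|\le M N^{-1/2-\varepsilon}$ (Lemma~\ref{lem:grad}, which itself relies on the quantitative boundary rate above), and then bounds the occupation time of the reflected 2D dual walk on $D_N$ by $O(N^{-1}\log N)$; only the product of these two estimates kills the $N^2$. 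You locate the main obstacle in the boundary block $\lps$, but the paper's actual technical weight sits in this diagonal gradient/occupation-time balance, and the result obtained is only $\sup_{\bfk}|\varphi_N|=o(1)$ away from the corners, not $O(1/N)$.
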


\begin{remark}
As shown in \cite{ELX2017}, this choice for the left boundary generator $L_{l,N}$ is the most general for which we can write 
\[L_{l,N}\eta_j=q^{1,j}+\sum_{k\in \lps}q^{1,j}_k\eta_k\]
and 
\[L_{l,N}\eta_j\eta_k=q^{2,j,k}+\sum_{l\in \lps}q^{2,j,k}_l\eta_l+\sum_{l\neq m\in \lps}q^{2,j,k}_{l,m}\eta_l\eta_m,\]
for some constants $q^{1,j}$, $q^{1,j}_k$, $q^{1,j}$, $q^{2,j,k}$ $q^{2,j,k}_{l}$ and $q^{2,j,k}_{l,m}$. In other words, this model is the most general for which $L_{l,N}$ preserves polynomials of degree $\nu$ in $\eta$.

When this condition is not respected, one can still derive a hydrodynamic limit, if at the left boundary, particles are created and removed at a rate which depends in a small measure on the configuration at the boundary. This is the content of the next section.\end{remark}

\subsection{Creation/annihilation rate depending on the local boundary configuration} 
\label{sec:Model2}
In order to present as general a result as possible, we now change the left boundary generator $L_{l,N}$, to one where particles are created and annihilated at the first site depending on the state of the boundary. We therefore let
\begin{equation}\label{eq:genl2}
(\widetilde{L}_{l,N} f)(\eta) =c(\eta_1,\dots,\eta_p) \br{f(\sigma^{1}\eta)-f(\eta)}\;.
\end{equation}
where $c$ is a function $c:\{0,1\}^p\to \R_+$. 
 
 \bigskip
 
The dynamics for this model is more general, however in order to derive the hydrodynamic limit, we need to assume that the creation and annihilation rate $c$ do not depend too much on the boundary configuration. Let us denote by $\xi$ the elements of $\{0,1\}^{p-1}$, we let 
\[A=\inf_{\xi\in\{0,1\}^{p-1}}c(0, \xi)\]
\[B=\inf_{\xi\in\{0,1\}^{p-1}}c(1,\xi)\]
the minimal creation and annihilation rates. Denote $\lambda(0,\xi)=c(0,\xi)-A$ and $\lambda(1,\xi)=c(1,\xi)-B$, we assume that 
\begin{equation}\label{eq:Asslambda}\tag{A2}(p-1)\sum_{\xi\in\{0,1\}^{p-1}}\br{\lambda(0,\xi)+\lambda(1,\xi)}\leq A+B .\end{equation}
We now state the hydrodynamic limit for this second model.
We use analogous notations as for Theorem \ref{thm:HydroGen}, and denote
$\widetilde{\Prob}_{\nu_N}$ the distribution on $D(\bb R_+,\Omega_N)$ induced by the
process $\eta(t)$ started from $\nu_N$, and driven by the generator $\widetilde{L}_N:=\widetilde{L}_{l,N}+L_{b,N}+L_{r,N}$. Expectation with respect to
$\widetilde{\Prob}_{\nu_N}$ is denoted $\widetilde{\E}_{\mu^N}$.
\begin{theorem}
	\label{thm2}
	Assume \eqref{eq:Asslambda}, there exists $\widetilde{\alpha}\in [0,1]$ such that for any $T>0$, any continuous function $G:[0,1] \to \bb R$, and any $t\in [0,T]$
\begin{equation*}
\lim_{N\to \infty} \widetilde{\E}_{\nu_N}\pa{\abs{\frac 1N \sum_{k\in \Lambda_N} G(k/N) \, \eta_k(t) -
		\int_{[0,1]} G(u) \bar \rho (t,u)\, du}}=0\;,
\end{equation*}
where $\bar \rho$ is the unique solution of \eqref{eq:Eqrhobar}, except with $\widetilde{\alpha}$ replacing $\alpha$. 
\end{theorem}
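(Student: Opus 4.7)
The plan is to follow the duality-based strategy of Theorem~\ref{thm:HydroGen}, but to handle the fact that $\widetilde{L}_{l,N}$ no longer preserves polynomials of degree at most two in $\eta$: the rate $c(\eta_1,\xi)$ has, in general, arbitrary polynomial degree in the boundary variables, so the correlation hierarchy does not close on its own. Assumption \eqref{eq:Asslambda} is precisely what allows the $\xi$-dependent fluctuation of $c$ to be treated as a small perturbation of a bare reservoir at site $1$ with creation rate $A$ and annihilation rate $B$.

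As a preliminary step, I would define $\widetilde\alpha$ in analogy with \eqref{eq:Defalpha}. Under the non-degeneracy condition $A+B>0$ (necessary for the problem to be non-trivial, since otherwise \eqref{eq:Asslambda} forces $c\equiv 0$), the operator $\widetilde{L}_{l,N}+L_{b,p+1}$ is irreducible on $\{0,1\}^p$ and therefore admits a unique invariant probability measure $\widetilde\mu$; one then sets $\widetilde\alpha:=\E_{\widetilde\mu}(\eta_p)$. Writing $c(0,\xi)=A+\lambda(0,\xi)$ and $c(1,\xi)=B+\lambda(1,\xi)$, the density $\rho_N(t,k):=\widetilde{\E}_{\nu_N}[\eta_k(t)]$ satisfies the discrete heat equation in the bulk, a standard reservoir boundary condition at $k=N-1$, and at $k=1$
\[\partial_t\rho_N(t,1)=\rho_N(t,2)-\rho_N(t,1)+A-(A+B)\rho_N(t,1)+R_N(t),\]
where $R_N(t)=\widetilde{\E}_{\nu_N}\bigl[(1-\eta_1(t))\lambda(0,\xi(t))-\eta_1(t)\lambda(1,\xi(t))\bigr]$ is an unclosed boundary term. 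Following the strategy of Section~\ref{sec:rho}, I would compare the joint evolution of $(\eta_1,\dots,\eta_p)$ to its stationary counterpart under $\widetilde\mu$ and obtain the time-averaged bound $\frac{1}{T}\int_0^T|\rho_N(t,1)-\widetilde\alpha|\,dt\to 0$ as $N\to\infty$.

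The heart of the proof is the correlation estimate on $\varphi_N(t,k,\ell):=\widetilde{\E}_{\nu_N}[\eta_k(t)\eta_\ell(t)]-\rho_N(t,k)\rho_N(t,\ell)$. In the bulk, $\varphi_N$ evolves as a discrete heat equation with a diagonal source term and its initial value is $O(1/N)$ since $\nu_N$ is a product measure; however, $\widetilde{L}_{l,N}(\eta_k\eta_\ell)$ couples $\varphi_N$ to arbitrarily high-order correlations near the left boundary. This is the main obstacle, and the point where Assumption~\eqref{eq:Asslambda} is essential: the combinatorial factor $p-1$ and the $L^1$ bound on $\lambda$ are tuned precisely so that, in sup norm over $(k,\ell)$, the sum of the boundary source terms is dominated by the dissipation provided by the bare rates $A$ and $B$. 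A Gr\"onwall-type argument then gives $\sup_{k,\ell}|\varphi_N(t,k,\ell)|\to 0$ uniformly on compact time intervals, paralleling Section~\ref{sec:phi}.

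Finally, combining the boundary density estimate at $k=1$, the vanishing of $\varphi_N$, and the discrete heat equation in the bulk, one passes to the limit as in Theorem~\ref{thm:HydroGen}. Applying the Cauchy--Schwarz bound $\widetilde{\E}_{\nu_N}|X_N|\leq|\widetilde{\E}_{\nu_N}X_N|+\sqrt{\mathrm{Var}(X_N)}$ to $X_N=\frac{1}{N}\sum_k G(k/N)\eta_k(t)-\int_0^1G(u)\bar\rho(t,u)\,du$ converts the mean and variance estimates into the desired $L^1$ convergence, with $\bar\rho$ the unique solution of \eqref{eq:Eqrhobar} in which $\alpha$ is replaced by $\widetilde\alpha$.
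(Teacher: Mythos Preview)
The paper does not actually prove Theorem~\ref{thm2} independently: immediately after the statement it declares that the proof of Theorem~\ref{thm:HydroGen} ``extends straightforwardly'' to Theorem~\ref{thm2} using the tools developed for the hydrostatic limit in \cite{ELX2017}, and that $\widetilde\alpha$ is the constant from Theorem~2.4 there. The intended argument is therefore to rerun Sections~\ref{sec:rho}--\ref{sec:Hydro} verbatim, replacing the one- and two-particle dual walks by the branching duals of \cite{ELX2017} adapted to $\widetilde L_{l,N}$, with assumption~\eqref{eq:Asslambda} ensuring the branching is subcritical so that all the random-walk estimates survive.

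Your outline correctly identifies the obstacle (the hierarchy for $\varphi_N$ does not close) and the role of \eqref{eq:Asslambda} (a smallness condition on the $\xi$-dependent part of $c$), but the execution has two genuine gaps.

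First, the boundary density estimate you propose is only time-averaged, whereas the scheme of the paper requires the pointwise bound of Proposition~\ref{lem:convrhop} with explicit rate $N^{\varepsilon-1}/s$. That bound feeds directly into Lemma~\ref{lem:grad} and Corollary~\ref{cor:grad}, which control the diagonal increment $m(t,\bfk)=(\rho_N(t,k+1)-\rho_N(t,k))^2$; without those gradient estimates the factor $N^2$ in front of $m$ in \eqref{eq:phiX} is fatal. (A minor slip: for a product initial measure $\varphi_N(0,\cdot)\equiv 0$, not $O(1/N)$.)

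Second, your Gr\"onwall argument for $\sup_{k,\ell}|\varphi_N|$ does not close. At a boundary pair $(1,\ell)$ the evolution of $\varphi_N$ involves centered correlations of order up to $p+1$ through the $\lambda$-terms; bounding those crudely by $1$ yields an $O(1)$ source, and \eqref{eq:Asslambda} only says this source is at most $A+B$ --- it does not make it vanish. A differential inequality $\partial_t\Phi_N\le 0\cdot\Phi_N+O(1)$ gives $\Phi_N(t)=O(t)$, not $o(1)$. To salvage this you would have to track the full hierarchy $(\varphi_N^{(n)})_{n\ge 2}$ simultaneously and show that \eqref{eq:Asslambda} makes the inter-order coupling contractive in a weighted norm; that is precisely what the branching-dual machinery of \cite{ELX2017} is built to replace, and you have not indicated how to do it directly. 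Even granting that, the bulk diagonal source $N^2 m$ cannot be absorbed by Gr\"onwall --- in the paper it is handled by the occupation-time estimate of Lemma~\ref{lem:secondterm}, which has no counterpart in your sketch.
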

Note that the left density $\widetilde{\alpha}$ is the limit $\alpha$ introduced in Theorem 2.4 of \cite{ELX2017}. We will only write the proof of Theorem \ref{thm:HydroGen} and assume that $L_A=0$. With the tools developed for the hydrostatic limit in \cite{ELX2017}, the proof of Theorem  \ref{thm:HydroGen} extends straightforwardly to both the case $\sum_{j\neq k\in \lps}a_{j,k}=0$ and Theorem \eqref{thm2}.

\subsection{Duality and scheme of the proof}
Denote for any $(t,k,l)\in [0,T]\times \Lambda_N^2$
\begin{equation}
\label{eq:Defrho}
 \rho_N(t,k)=\E_{\nu_N}(\eta_k(t))
\end{equation}
the density at site $k\in \lambda_N$, and adopt a similar notation for the two-points correlation function
\begin{equation}
\label{eq:DefCorHydro}
\varphi_N(t,k,l)=\E_{\nu_N}\Big(\{\eta_k(t)-\rho_N(t,k)\}\{\eta_l(t)-\rho_N(t,l)\}\Big).
\end{equation}
To prove Theorem \ref{thm:HydroGen}, we will use duality between $\rho_N$, (resp. $\varphi_N$), and random walks on $\Lambda_N$ (resp. $\Lambda_N^2$).

\medskip

We start by introducing a set of cemetery states 
\[\partial\Lambda_N=\{\dd_1,\dots,\dd_p\}\cup\{N\},\]
each representing one of the reservoirs, and let $\bar\Lambda_N=\Lambda_N\cup\partial\Lambda_N$. Further define the function ${\rho_\dd}$ on $\partial \Lambda_N$ given by 
\[\rho_\dd(\dd_j)=\alpha_j, \; \forall j\in \lps \quad \mbox{ and } \quad   \rho_\dd(N)=\beta.\]
We extend the function $\rho_N $ defined in \eqref{eq:Defrho} to $[0,T]\times \bar\Lambda_N$ by letting for any $ t\geq 0$
\begin{equation}
 \label{eq:bcrho}
 \rho_N(t,\cdot)=\rho_\dd(\cdot) \quad \mbox{ on }\;\partial\Lambda_N.
\end{equation}

\medskip

We now introduce dual generators, acting on functions on $\bar\Lambda_N$, defined by 
\begin{equation}
\label{eq:DefLbulkd}
(\gend_{b,N}f)(j)=\begin{cases}
                  (\Delta_Nf)(j):= f(j+1)+f(j-1)-2f(j) & \mbox{ for }1<j<N,\\
		   ( \nabla^+_N f) (j):=f(j+1)-f(j)& \mbox{ for }j=1,\\
		    (\nabla^-_N f) (j):=f(j-1)-f(j)& \mbox{ for }j=N-1,
		    \end{cases},
\end{equation}
\begin{equation*}
\label{eq:DefLrbd}
(\gend_{r,N}f)(j)=\ind{j=N-1}\br{f(N)-f(j)}
\end{equation*}
\begin{equation}
\label{eq:DefLRd}
(\gend_{R}f)(j)=\ind{j\in \lps}r_j\br{f(\dd_j)-f(j)}
\end{equation}
and finally
\begin{equation}
\label{eq:DefLCd}
(\gend_{C}f)(j)=\ind{j\in \lps}\sum_{k\neq j\in \lps }c_{j,k}\br{f(k)-f(j)}.
\end{equation}
Note in particular that any of the cemetery states in $\partial\Lambda_N$ is an absorbing state for each of these dual generators.
Then, letting 
\begin{equation}
\label{eq:Defgend}
\gend_N=\gend_{R}+\gend_{C}+\gend_{b,N}+\gend_{r,N}, 
\end{equation}
using the fact that $\partial_t\rho_N(t,k)=N^2E_{\nu_N}(L_N\eta_k(t))$ and notation \eqref{eq:bcrho}, one obtains after elementary computations that the function $\rho_N$ defined in \eqref{eq:Defrho} is a solution of the system 
\begin{equation}
\label{eq:systrho}
\begin{cases}
   \partial_tf=N^2 \gend_N f&\\
   f(0,\cdot)=\rho_0(\cdot/N)& \mbox{ on } \Lambda_N\\
   f(t,\cdot)=\rho_\dd(\cdot),& \mbox{ on }\partial \Lambda_N ,\; \forall t\in[0,T]
  \end{cases}. 
\end{equation}

The first ingredient to prove Theorem \ref{thm:HydroGen} is showing that for any $t=t(N)$ large enough,
\begin{equation}
\label{eq:rhop1}
\rho_N(t,p+1)=\alpha+o_N(1),
\end{equation}
where $\alpha$ is given by \eqref{eq:Defalpha}. Since $\rho_N$ is solution of \eqref{eq:systrho}, and since on $\{p+1,\dots,N-1\}$ $\gend_N$ acts as the discrete Laplacian $\Delta_N$ (with our notation for site $N$, $(\{\gend_{r,N}+\gend_{b,N}\}f)(N-1)=(\Delta_N f)(N-1)$), this yields
\begin{equation} 
\label{eq:eqrhoapprox}
\begin{cases}
\partial_t \rho_N(t,k)=N^2(\Delta_N \rho_N)(t,k) &\forall (t,k)\in [0,T]\times\{p+2,N-1\}\\
\rho_N(0,\cdot)=\rho_0(\cdot/N)&\mbox{ on }\{p+2,\dots,N-1\}\\
\rho_N(t,p+1)=\alpha +o_N(1)&  \forall t\in]0,T]\\
\rho_N(t,N)=\beta&  \forall t\in]0,T],
\end{cases}
\end{equation}
whose solution converges weakly as $N\to\infty$ towards the solution of \eqref{eq:Eqrhobar}. Proving \eqref{eq:rhop1} is the purpose of Section \ref{sec:rho}.

\medskip

The second ingredient is a control of the two-points correlation function $\varphi_N$ defined in \eqref{eq:DefCorHydro}: for some large set $S_{N,\delta} \subset \{(k,l), \quad p+1\leq k<l\leq N-1\}$
\[ \limsup_{N\to\infty}\sup_{\substack{(k,l)\in S_{N,\delta}\\
t\in [0,T]}}\abs{\varphi_N(t,k,l)}= 0,\]
which allows, in Theorem \ref{thm:HydroGen}, to replace $\eta_k(t)$ by its expectation $\rho_N(t,k)$. This estimate is obtained in Section \ref{sec:phi}. With these two elements, a few technical difficulties remain to prove Theorem \ref{thm:HydroGen}, which is done in Section  \ref{sec:Hydro}.

\section{Estimation of the left density and technical lemmas}
\label{sec:rho}
\subsection{Estimation of the density at the boundaries}
Define a continuous time random walk $\widetilde{X}$ on $\bar\Lambda_N$ driven by the sped-up dual generator $N^2\gend_N$ defined in \eqref{eq:Defgend}. For any set $B\subset \bar \Lambda_N$, define $\widetilde{H}(B)$ as $\widetilde{X}$'s hitting time  of the set $B$, 
\[\widetilde{H}(B)=\inf\{s\geq 0, \widetilde{X}\in B \},\]
and define $\widetilde{H}_t(B)=t\wedge \widetilde{H}(B)$. For any $ j\in \bar \Lambda_N$, denote $\widetilde{\Prob}_j$ and $\widetilde{\E}_j$ the distribution of $\widetilde{X}$ started from $j$ and its expectation. Since $\rho_N$ is solution of \eqref{eq:systrho}, it is well known that for any $j\in \bar \Lambda_N$ and any $t\geq 0$
\begin{equation*}
\rho_N(t,j)=\widetilde{\E}_{j}\cro{b_N\big(t-\widetilde{H}_t,\widetilde{X}(\widetilde{H}_t)\big)},
\end{equation*}
where we shortened $\widetilde{H}_t:=\widetilde{H}_t(\partial \Lambda_N)$ and where $b_N$ is the function giving the value of $\rho_N$ at the space-time boundary, defined as
\begin{equation}
b_N(t,j)=\rho_0(j/N)\ind{ t=0, j\in \Lambda_N}+\rho_\dd(j)\ind{j\in \partial\Lambda_N}.
\end{equation}

\medskip

To present the proof in as simple a setting as possible, however, it is not convenient that $\gend_N$ has absorbing states. We therefore define $ \gendb_N=\gend_N+\gend_{\dd,N}$, with
\begin{equation*}
(\gend_{\dd,N}f)(j)=\ind{j\in \partial\Lambda_N}\br{f(p+1)-f(j)}
,\end{equation*}
which allows jumps at rate $1$ from any of the cemetery states to  site $p+1$. We denote $X$ a random walk driven by $N^2\gendb_N$, in particular, assuming that both random walk start from the same point in $\Lambda_N$, $X$ coincides with $\widetilde{X}$ at least up until time
\[H(\partial\Lambda_N)=\inf\{s\geq 0, \; X\in \partial\Lambda_N \}=\widetilde{H}(\partial\Lambda_N).\]
We denote without "$\;\widetilde{\;\;}\; $" all quantities and items relative to $X$, and thanks to the last observation, we are still able to write
\begin{equation}
\label{eq:dualrho} 
\rho_N(t,j)=\E_{j}\cro{b_N\big(t-H_t,X(H_t)\big)},
\end{equation}
where once again $H_t:=H_t(\partial \Lambda_N)$

We are now ready to state the main result of this section. We start by proving equation \eqref{eq:rhop1}, in order to obtain the differential system \eqref{eq:eqrhoapprox}. Since we will also need to control the density gradient at the boundaries $p+1$ and $N-1$ to estimate the correlations, we prove a more general result than \eqref{eq:rhop1}, and estimate carefully the left and right densities.
\begin{proposition}
\label{lem:convrhop}
For any $\varepsilon>0$, there exists a constant $K=K(\varepsilon)$ such that for any time $s\in ]0,T]$ , 
\begin{equation}
\label{eq:convrhop1}
\sup_{t\in [s, T]}\abs{\rho_N(t,p+1) -\alpha}\leq K \frac{N^{\varepsilon-1}}{s},
\end{equation}

	\begin{equation}
	\label{eq:convrhop2}
\sup_{t\in [s,T]}\abs{\rho_N(t,p+1)-\rho_N(t,p+2)} \leq K \frac{N^{\varepsilon-1}}{s},
	\end{equation}
and
\begin{equation}
\label{eq:convrhop3}
\sup_{t\in [s,T]}\abs{\beta-\rho_N(t,N-1)}\leq  K \frac{N^{\varepsilon-1}}{s}.
\end{equation}
\end{proposition}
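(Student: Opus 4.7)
The plan is to exploit the duality \eqref{eq:dualrho} to compare $\rho_N(t,\cdot)$ at the relevant sites with the harmonic function
\[h(j) := \E_j\cro{\rho_\dd(X(H(\partial\Lambda_N)))},\]
which represents the stationary limit of the density, and then to compare $h$ with the targets $\alpha$ and $\beta$.

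To pass from $\rho_N(t,j)$ to $h(j)$ we write, using that $b_N(t-H_t,X(H_t))=\rho_\dd(X(H))$ on $\{H\leq t\}$,
\[\rho_N(t,j)-h(j) = \E_j\cro{\bigl(b_N(t-H_t,X(H_t))-\rho_\dd(X(H))\bigr)\ind{H>t}},\]
so that $|\rho_N(t,j)-h(j)| \leq 2\|\rho_0\vee\rho_\dd\|_\infty\, \Prob_j(H>t)$. I then show $\E_j[H]=O(N^{-1})$ for $j\in\{p+1,p+2,N-1\}$: the sped-up walker starting at such a $j$ reaches $\{p,N\}$ in expected time $(j-p)(N-j)/N^2=O(N^{-1})$; each $\lps$-excursion lasts $O(N^{-2})$ in sped time (rates inside $\lps$ are $\Theta(N^2)$) and, by assumption \eqref{eq:assrates}, has positive probability uniformly in $N$ of ending in absorption at some $\dd_j$, so the number of rounds is geometric with bounded mean. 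Markov's inequality then gives $\Prob_j(H>t) \leq KN^{-1}/t$.

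For the identification of $h$ at the boundary sites, note that $h$ is $\gendb_N$-harmonic on $\Lambda_N$ with $h(\dd_j)=\alpha_j$ and $h(N)=\beta$. Since $\gendb_N$ is the discrete Laplacian on $\{p+1,\dots,N-1\}$ and $h(p+1)=(h(p)+h(p+2))/2$ by harmonicity at $p+1$, $h$ extends affinely to $\{p,\dots,N\}$, with slope $m:=h(p+1)-h(p)=(\beta-h(p))/(N-p)=O(N^{-1})$. To pin down $h(p)$, I compare it to the stationary density of the restricted dual walker: define $Y$ on $\lps\cup\{\dd_1,\dots,\dd_p\}$ with generator $\gend_R+\gend_C$ together with stirring confined to $\lps$, and absorption at each $\dd_j$. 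By the duality between $\mu$ and $Y$, $\alpha_{\rm stat}(j):=\E_j^Y\cro{\rho_\dd(Y(H^Y))}=\E_\mu(\eta_j)$, so $\alpha_{\rm stat}(p)=\alpha$. The function $\delta:=h-\alpha_{\rm stat}$ then vanishes on $\{\dd_j\}$, satisfies the $Y$-harmonic equation on $\lps\setminus\{p\}$, and at $p$ obeys $(\mathcal{L}^Y\delta)(p)=-m$, the stirring-to-$p+1$ term present in the equation for $h$ but absent from that for $\alpha_{\rm stat}$. Dynkin's formula applied at time $H^Y$ yields
\[\delta(p) \;=\; m\cdot \E_p^Y\cro{\int_0^{H^Y}\ind{Y_s=p}\,ds},\]
and the expected occupation time of $p$ by the unsped walker $Y$ on the fixed finite set $\lps$ is a constant depending only on the rates $r_j,c_{j,k}$ (finite by \eqref{eq:assrates}). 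Hence $|h(p)-\alpha|=O(N^{-1})$, so by the affinity of $h$ on $\{p,\dots,N\}$ one also has $|h(p+1)-\alpha|,|h(p+2)-\alpha|,|h(N-1)-\beta|$ all $O(N^{-1})$.

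Combining the two bounds gives $|\rho_N(t,j)-\text{target}| \leq KN^{-1}/s+ KN^{-1}\leq K N^{-1}/s$ uniformly for $t\in[s,T]$, which is stronger than \eqref{eq:convrhop1} and \eqref{eq:convrhop3} (the factor $N^\varepsilon$ is not needed with this approach). Estimate \eqref{eq:convrhop2} then follows from \eqref{eq:convrhop1} applied at $p+1$ and $p+2$ via the triangle inequality. The main technical obstacle is the identification $h(p+1)=\alpha+O(N^{-1})$: because the full walker $X$ can escape $\lps$ to the right by stirring from $p$ to $p+1$, its absorption distribution on $\partial\Lambda_N$ is \emph{not} the one defining $\alpha$ via duality with $\mu$, and the Poisson equation / Dynkin argument above is what makes the comparison quantitative.
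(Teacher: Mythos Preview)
Your proof is correct and takes a genuinely different route from the paper's. The paper decomposes the trajectory of $X$ started at $p+1$ into i.i.d.\ excursions away from $p+1$, bounds $\Prob_{p+1}(H>t)$ by a union-bound argument on the number of excursions (producing the $(\log N)^2$ factor that is then absorbed into $N^{\varepsilon}$), and identifies the limit $\alpha$ by conditioning on $\{H(\partial\Lambda_N\setminus\{N\})<H(\{N\})\}$ and invoking the duality between $\mu$ and the confined walker $Y$. You instead introduce the stationary harmonic profile $h(j)=\E_j[\rho_\dd(X(H))]$, control $|\rho_N(t,j)-h(j)|$ by Markov's inequality on $\E_j[H]=O(N^{-1})$ (which you obtain from the same excursion picture but at the level of expectations rather than tails), and then pin down $h(p)=\alpha+O(N^{-1})$ via a Poisson-equation/Dynkin comparison between $h$ and the $Y$-harmonic function $\alpha_{\rm stat}$. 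Your approach is cleaner: it separates the time-relaxation estimate from the identification of the limit, avoids the $(\log N)^2$ loss entirely (so the $N^{\varepsilon}$ is indeed superfluous), and the Dynkin step makes transparent why the leakage from $\lps$ to $p+1$ only perturbs the boundary value by $O(N^{-1})$. The paper's approach, on the other hand, is slightly more explicit about the probabilistic mechanism (each excursion has positive chance of absorption), which is closer in spirit to how the two-point correlation estimates are carried out later in Section~\ref{sec:phi}.
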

\begin{proof}[Proof of Proposition \ref{lem:convrhop}]We will only detail the proof for the first identity, since the second is  and third are proved in the same way. 
To estimate $\rho_N(t,p+1)$, we use \eqref{eq:dualrho}, and start the random walk $X$ at site $p+1$. Then, it performs excursions away from $p+1$, either in the bulk, in which case the excursion lasts a macroscopic time of order $1/N$ (recall that the whole random walk is accelerated by $N^2$), or in the left boundary, in which case it has a positive probability $\pi$ to reach one of the cemetery states $\dd_j$'s. In a time $s$, $X$ will perform a number of excursions at least of order $sN$, each yielding a chance of ending in one of the cemetery states.
	
	\medskip
	
	We now make this argument rigorous. Since similar proofs will be used repeatedly, we detail the proof of this Proposition, and will be more concise later on. Recall that $X$ can jump from $\partial\Lambda_N$ to $p+1$ at rate $1$, let $t_0=0<t_1<t_2\dots$ denote the successive arrival times of $X$ at the site $p+1$ :
\[t_0=0,\quad \mbox{ and } \quad t_{n+1}=\inf\{ t>t_n,\quad  X(t)=p+1\quad \mbox{and}\quad X(t^-)\neq p+1\}.\]
The random walk $X$ being a Markov process, under $\Prob_{p+1}$, the successive excursions  $(X(t))_{t_n\leq t<t_{n+1} }$ are i.i.d. in $n\geq 0$. To distinguish the two types of excursions away from $p+1$, denote 
\[E_n=\{X(t_{n+1}^-)=p+2\}\]
which indicates the $n$-th excursion was performed in the bulk rather than in the boundary. Let us denote 
\[F_n=\{\mbox{ There exists }(t,j)\in [t_n,t_{n+1}[\times \lps, \mbox{ such that }X(t)=\dd_j\}.\]
\[(\mbox{resp. }\; G_n=\{\mbox{ There exists }t\in [t_n,t_{n+1}[, \mbox{ such that }X(t)=N\}\;),\]
which indicates the $n$-th excursion reached one of the cemetery states $\dd_n$ (resp. $N$).
Finally, we denote by $d_n=t_{n+1}-t_n$ the duration of the $n$-th excursion. Since the excursions away from $p+1$ are i.i.d. under $\Prob_{p+1}$, so are the $(E_n)_{n\in \N}$, $(F_n)_{n\in \N}$,  $(G_n)_{n\in \N}$, and $(d_n)_{n\in \N}$.

\bigskip
Define 
\begin{equation}
\label{eq:Defpialpha1}
\pi=\Prob_p\Big[H(\partial\Lambda_N)<H(\{p+1\})\Big],
\end{equation} 
which is the probability that an excursion in the left boundary reaches one of the cemetery states before coming back to site $p+1$.
One easily obtains, for any $n\geq 0$, the following properties :
 \begin{equation}
 \label{eq:P1}
 \quad \Prob_{p+1}(E_n)=\Prob_{p+1}(E_n^c)=1/2,
 \end{equation}
\begin{equation}
\label{eq:P2}
\Prob_{p+1}(F_n\mid E_n)=0\quad \mbox{ and } \quad \Prob_{p+1}(F_n\mid E_n^c)=\pi,
\end{equation}
\begin{equation}
\label{eq:P3}
\Prob_{p+1}(G_n\mid E_n)=\frac{1}{N-1-p}\quad \mbox{ and } \quad \Prob_{p+1}(G_n\mid E_n^c)=0.
\end{equation}
Furthermore, there exists a constant $C$ such that  
\begin{equation}
\label{eq:P4}
\E_{p+1}(d_n\mid E_n)\leq C/N \quad \mbox{ and } \quad \E_{p+1}(d_n\mid E_n^c)\leq C/N^2.
\end{equation}
Because of the second part of this equation, the constant $C$ depends a priori on the rates of the left boundary dual generator $\gend_{l,N}$.
These identities are elementary, we do not detail their proof. In particular, \eqref{eq:P4} uses the fact that the generator $\gend_N$ was accelerated by $N^2$, therefore a excursion in the bulk as a duration of order $1/N$, whereas in the boundary, the random walk will perform a finite number of steps before heading back to site $p+1$, so that the time length of a typical excursion is of order $N^{-2}$.

\medskip

We now prove \eqref{eq:convrhop1}. Fix $t\in [0,T]$, \eqref{eq:dualrho} yields 
\begin{multline*}
\rho_N(t,p+1)=\E_{p+1}\cro{b_N\big(t-H_t(\partial\Lambda_N),X(H_t(\partial\Lambda_N))\big)}\\
=\E_{p+1}\Big[\rho_0(X(t/n))\ind{H_t(\partial\Lambda_N)=t}\Big]+\beta\Prob_{p+1}\Big[H_t(\partial\Lambda_N)=H(\{N\})\Big]\\
+\sum_{j\in \lps}\alpha_j\Prob_{p+1}\Big[H_t(\partial\Lambda_N)=H(\{\dd_j\})\Big],
\end{multline*}
so that, since $\rho_0$, $\beta$ and the $\alpha_j$'s are less than $1$, 
\begin{multline}
\label{eq:boundrhop}
\abs{\rho_N(t,p+1)-\sum_{j\in \lps}\alpha_j\Prob_{p+1}\Big[H(\partial\Lambda_N)=H(\{\dd_j\})\Big |H(\partial\Lambda_N\setminus\{N\})\leq H(\{N\})\Big]}\\
\leq C\Prob_{p+1}\Big[H_t(\partial\Lambda_N)=H_t(\{N\})\Big]
\end{multline}
for some constant $C:=C(p)$.
\medskip

Let us now estimate the right-hand side above. From equations \eqref{eq:P1}, \eqref{eq:P2} and \eqref{eq:P3}, we obtain immediately, since $F_n$ and $G_n$ are disjoint events, that for any $n\in \N$
\begin{equation}
\label{eq:PFnGn}
\Prob_{p+1}(F_n\cup G_n)=\frac{1}{2}\pa{\pi+\frac{1}{N-1-p}}:=\delta>0.
\end{equation}
Fix $0<t<T$, and denote by $M=M(t)$ the number of complete excursions occurring before $t$,
\[M=\max\{n\in \N ,\; t_n<t \mbox{ and } t_{n+1}\geq t \}.\]
By definition of $F_n$ and $G_n$, we have 
\[\{H_t(\partial\Lambda_N)=t\}\subset\cap_{n=0}^M (F_n\cup G_n)^c,\] 
so that for any $m\in \N$
\begin{align*}
\Prob_{p+1}\Big[H_t(\partial\Lambda_N)=t\Big]&\leq \Prob_{p+1}\pa{\bigcap_{n=0}^m (F_n\cup G_n)^c}\Prob_{p+1}(M\geq m)+\Prob_{p+1}(M\leq m)\\
&\leq \Prob_{p+1}\pa{\bigcap_{n=0}^m (F_n\cup G_n)^c}+\Prob_{p+1}\pa{\max_{0\leq k\leq m} d_{k}\geq \frac{t}{m+1}}\\
&\leq (1-\delta)^m+(m+1)\Prob_{p+1}\pa{d_0\geq \frac t {m+1}}\\
\end{align*}
thanks to equation \eqref{eq:PFnGn}.
According to \eqref{eq:P4}, $\E_{p+1}(d_
0)\leq C/N$. By Markov inequality, the second term is thus less than $C(m+1)^2/tN$. Since $\delta>\pi/2$, we then let $m=-\log N/\log(1-\pi/2)$ to obtain that that for some constant $K_1$ depending on $T$, $C$ and $ \pi$,
\begin{equation}
\label{eq:boundt}
\Prob_{p+1}\Big[H_t(\partial\Lambda_N)=t\Big]\leq  \frac{K_1(\log N)^2}{tN}.
\end{equation}
Furthermore, using equation \eqref{eq:P2} and \eqref{eq:P3},
\begin{align}
\label{eq:boundN}
\Prob_{p+1}\big[H_t(\partial\Lambda_N)=H(\{N\})\big]&\leq \Prob\big[H(\{N\})<H(\partial\Lambda_N\setminus\{N\}) \big]\nonumber\\
&=\frac{1}{2\delta(N-1-p)}\leq\frac{1}{\pi(N-1-p)}.
\end{align}

\medskip

Using  \eqref{eq:boundt}, and the bound above, we thus obtain that for any $\varepsilon>0$, there exists a constant $K_2$ depending on $T$, $C$, $\pi$ and $p$, such that 
\begin{equation*}
\label{eq:probmort}
\Prob_{p+1}(H_t(\partial\Lambda_N)=H_t(\{N\}))\leq K_2\frac{N^{\varepsilon-1}}{s} ,\end{equation*}
for any $t\in[s,T]$. Letting $K_3=C(p)K_2$, for any $\varepsilon>0$, we obtain from equation \eqref{eq:boundrhop} that for any $ t\in[s ,T]$
\begin{multline}
\label{eq:sumprobalpha}
\abs{\rho_N(t,p+1)-\sum_{j\in \lps}\alpha_j\Prob_{p+1}\Big[H(\partial\Lambda_N)=H(\{\dd_j\})\Big |H(\partial\Lambda_N\setminus\{N\})\leq H(\{N\})\Big]}\\
\leq K_3\frac{N^{\varepsilon-1}}{s}. 
\end{multline}

Let us denote $Y$ a random walk started from $\lps$, and driven by the generator $\gend_{l,N}+\gend_{b,p+1}$, where $(\gend_{b,p+1}f)(j)$ is defined in \eqref{eq:DefLbulkd} as the generator of a symmetric random walk on $\lps$ with reflection boundary conditions. Denote ${\mathbb Q}_{j}$ the distribution of $Y$ started from $j$, and $H^Y(B)$ the hitting time of the set $B$ by $Y$. Since the only cemetery states that can be reached by $Y$ are the $\dd_j$'s, the Markov property yields 
\begin{multline}
\label{eq:idQP}
\Prob_{p+1}\Big[H(\partial\Lambda_N)=H(\{\dd_j\})\Big |H(\partial\Lambda_N\setminus\{N\})\leq H(\{N\})\Big]\\
={\mathbb Q}_{p}\big[H^Y(\partial\Lambda_N\setminus \{N\})=H^Y(\{\dd_j\})\Big]. 
\end{multline}
Recall that we denoted $\mu$ the unique invariant measure of the generator $L_{l,N}+L_{b,p+1}$, letting 
\[\rho^*(j)=\E_\mu(\eta_j)\ind{j\in \Lambda_N}+\alpha_j\ind{j\in \partial\Lambda_N\setminus\{N\}},\] 
elementary computations similar to those performed for $\rho_N$ yield that the function $\rho^*$ is solution on $\lps\cup\partial \Lambda_N\setminus\{N\}$ of $(\gend_{l,N}+\gend_{b,p+1})\rho^*=0$ with boundary condition $\rho^*(\dd_j)=\alpha_j$.
By duality, we can therefore write  
\[\sum_{j\in \lps}\alpha_j\mathbb Q_{p}\Big[H^Y(\partial\Lambda_N\setminus \{N\})=H^Y(\{\dd_j\})\Big]=\E_\mu(\eta_p)=\alpha,\] 
where  $\alpha$ was introduced in \eqref{eq:Defalpha}. Recalling \eqref{eq:idQP} then allows \eqref{eq:sumprobalpha} to be rewritten as wanted
\[\sup_{t\in[s ,T] }\abs{\rho_N(t,p+1)-\alpha}\leq K_3\frac{N^{\varepsilon-1}}{s}.\]

\bigskip

We now turn to equation \eqref{eq:convrhop2}. With only minimal adaptation of the proof above, we can write for a larger constant $K_4$
\[\sup_{t\in[s ,T] }\abs{\rho_N(t,p+2)-\alpha}\leq K_4\frac{N^{\varepsilon-1}}{s}.\]
so that equation \eqref{eq:convrhop2} follows immediately from the triangular inequality.

\bigskip

Finally, the third identity \eqref{eq:convrhop3} is proved in the same way as well : we split the random walk started from $N-1$ into excursions away from $N-1$. Each excursion has a probability $1/2$ of ending at site $N$, where the density is $\beta$, and has a probability $1/2(N-1-p)$ of reaching the other boundary. Since the proof is an easier version of that of equation \eqref{eq:convrhop1}, we do not detail it here. 
	\end{proof}
\begin{remark}[Regarding the assumption $\sum_{j\neq k}a_{j,k}=0$]
The step we just performed is the only point in the proof where we used $\sum_{j\neq k}a_{j,k}=0$. If the anticopy generator is added, the dual generators must be defined on the set $\{-1,1\}\times\bar \lambda_N$ instead of $\bar \Lambda_N$, because one must keep track of the number of times the anticopy generator $L_A$ inverted the value of the site occupied by the random walker. This burdens substantially the notations, therefore we refer the interested reader to \cite{ELX2017} for more details on how to overcome this difficulty. 
\end{remark}

	\subsection{Estimation of the gradient at the boundary}
	Now that we have estimated the density at the boundaries, we estimate the gradients at the boundary, which will be needed later on to estimate the correlations. This is done in Lemma \ref{lem:grad} below.
First, we estimate uniformly in its starting point $k\in\{p+1,\dots, N-1\}$, the probability that a random walk on $\bb Z$ reaches either $p+1$ or $N-1$  for the first time in a small time window $[t-s,t]$. Let $Y$ now denote a continuous time random walk in $\Z$ with jump rate $N^2$ to each neighbor, started from $k\in \{p+1,\dots,N-1\}$, and driven by  the generator $N^2\Delta_{\Z}$, where for any function $f:\Z\to\bb R$,
\[(\Delta_{\Z}f)(j)=f(j+1)+f(j-1)-2f(j).\]
As before, let $H^Y(B)$ be the time at which $Y$ reaches the set $B\subset \Z$, 
\[H^Y(B)=\inf\{t\geq 0,\; Y(t)\in B\}.\]
	To keep notations simple, we also denote $\Prob_k$ the distribution of this random walk started at site $k$.
\begin{lemma}
	\label{lem:BoundaryProba1} There exists a constant $C$ such that for any $0<s< t$   
	\[\sup_{k\in \{p+1,\dots,N-1\}}\Prob_k\pa{ H^Y(\{p+1, N-1\})\in [t-s,t]}\leq  C\pa{\frac{s}{t^{3/2}}+\frac{1}{N\sqrt{t-s}}}.\]
	The second term is the error when approximating $Y$ with a Brownian Motion, whereas the first one is the probability above applied to a rescaled Brownian Motion.
\end{lemma}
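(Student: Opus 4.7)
My plan is to compare the accelerated random walk $Y$ with a matching Brownian motion and to exploit the explicit hitting-time density of the latter; the first term in the bound will come from the Brownian estimate, the second from the invariance-principle error. First, I would apply a union bound reducing to hitting times of a single point,
\[\Prob_k\pa{H^Y(\{p+1,N-1\})\in[t-s,t]}\leq \sum_{j\in\{p+1,N-1\}}\Prob_k\pa{\tau_j^Y\in[t-s,t]},\]
where $\tau_j^Y=\inf\{u\geq 0: Y(u)=j\}$, and rewrite each term as the telescopic difference $\Prob_k(\tau_j^Y\leq t)-\Prob_k(\tau_j^Y\leq t-s)$.

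Next, I would compare $\tau_j^Y$ to the hitting time $\tau_j^B$ of the Brownian motion $B^k_u=k+\sqrt{2}\,N\,W_u$, whose infinitesimal variance $2N^2$ matches that of $Y$. A quantitative invariance principle --- obtained via a Skorokhod embedding or via the local central limit theorem for symmetric simple random walks --- should yield, uniformly in $k\in\{p+1,\dots,N-1\}$ and $u>0$,
\[\abs{\Prob_k(\tau_j^Y\leq u)-\Prob_k(\tau_j^B\leq u)}\leq \frac{C}{N\sqrt{u}},\]
producing the second error term $C/(N\sqrt{t-s})$ of the lemma. For the Brownian contribution, the reflection principle gives $\Prob_k(\tau_j^B\leq u)=2\Phi(-|k-j|/(N\sqrt{2u}))$, so $\tau_j^B$ admits the density $f(u)=\tfrac{|k-j|}{N\sqrt{4\pi u^3}}\exp(-(k-j)^2/(4N^2 u))$. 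Using $|k-j|\leq N$ and bounding the exponential by $1$ yields $f(u)\leq (2\sqrt{\pi u^3})^{-1}$, and a direct integration gives $\Prob_k(\tau_j^B\in[t-s,t])\leq C s/t^{3/2}$ in the diffusive regime $s\leq t/2$; the complementary regime $s>t/2$ can be absorbed into the second error term by elementary constant adjustments, or alternatively handled by the spectral bound on the BM exit density from $[0,1]$, which decays exponentially in $u$.

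The main obstacle will be the quantitative invariance principle in the second step: I need a uniform $O(1/(N\sqrt u))$ control on the hitting-time CDFs, valid down to small $u$, which cannot be deduced from the weak convergence of the rescaled simple random walk to BM alone. In one dimension this is nonetheless classical and can be derived, for instance, by combining the exact gambler's-ruin computation for $\tau_j^Y$ with Stirling-type Gaussian comparisons of the discrete hitting-time mass function with the Lévy density of $\tau_j^B$.
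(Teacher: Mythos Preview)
Your approach is essentially the same as the paper's: union bound to a single target point, telescoping the hitting-time CDF, Gaussian comparison, and a direct estimate on the Gaussian part. The only substantive difference lies in how you carry out the comparison step. You frame it as a ``quantitative invariance principle'' on the hitting-time CDFs and flag this as the main obstacle; the paper bypasses this obstacle by applying the reflection principle \emph{at the random-walk level}, writing
\[
\Prob_k\big(\tau^Y_{N-1}\le u\big)=\Prob_k\Big(\max_{[0,u]}Y\ge N-1\Big)=2\,\Prob_0\big(Y(u)\ge N-1-k\big),
\]
and then invoking Berry--Esseen for the position $Y(u)$ (a sum of $N^2u$ i.i.d.\ increments), which immediately gives the $O\big(1/(N\sqrt{u})\big)$ error against the normal CDF. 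This is exactly the uniform $O(1/(N\sqrt{u}))$ control on $\Prob_k(\tau^Y_j\le u)$ versus $\Prob_k(\tau^B_j\le u)$ that you were looking for, obtained without any Skorokhod embedding or hitting-time LCLT. After that, the paper bounds the difference of normal tails $\mathcal N\big(\tfrac{N-1-k}{N\sqrt{t-s}}\big)-\mathcal N\big(\tfrac{N-1-k}{N\sqrt{t}}\big)$ by the Lipschitz estimate $\tfrac{1}{\sqrt{2\pi}}|a-b|$, which is equivalent to your density bound $f(u)\le (2\sqrt{\pi u^3})^{-1}$ after integrating.

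In short: your plan is correct, and the paper's proof is precisely your plan with the vague ``quantitative invariance principle'' replaced by the one-line identity (reflection for the walk) plus standard Berry--Esseen.
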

\begin{figure}
	\input{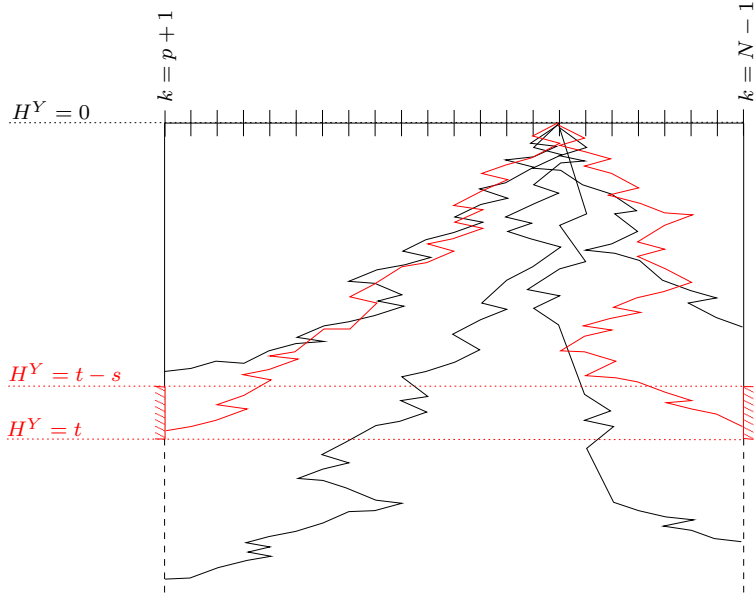}
	\caption{Lemma \ref{lem:BoundaryProba1} estimates the probability of the red trajectories uniformly in the starting point of the random walk.}
	\label{fig:Target}
\end{figure}

\begin{proof}[Proof of Lemma \ref{lem:BoundaryProba1}]
	A visual representation of the Lemma is given in Figure \ref{fig:Target}.
	We want to estimate uniformly in $k\in \{p+1, \dots, N-1\}$ the probability that a random walk started from $k$ hits $p+1$ or $N-1$ between times $t-s$ and $t$. We first write 
	\begin{multline*}\Prob_k\Big[ H^Y(\{p+1, N-1\})\in [t-s,t]\Big]\\
	\leq\Prob_k\Big[  H^Y(\{N-1\})\in [t-s,t]\Big]+\Prob_k\Big[ H^Y(\{p+1\})\in [t-s,t]\Big]\end{multline*}
	The two probabilities on the left hand side are estimated in the same fashion, so that we only estimate the first one. To prove Lemma \ref{lem:BoundaryProba1}, it is therefore sufficient to prove that for  some constant $C$, and any $p+1\leq k\leq N-1$
	\begin{equation}
	\label{eq:EstimatePi}
	\Prob_k\Big[ H^Y(\{N-1\})\in [t-s,t]\Big]\leq  C\pa{\frac{s}{t^{3/2}}+\frac{1}{N\sqrt{t-s}}}
	\end{equation}
	
	Let us denote \[Z(t)=\max_{t\in[0,T]}Y(t),\]
	By reflexion principle
	\begin{multline*}\Prob_k\Big[  H^Y(\{N-1\})\in [t-s,t]\Big]=\Prob_k\Big[Z(t)\geq N-1\Big]-\Prob_k\Big[Z(t-s)\geq N-1\Big]\\
	=2\Prob_0\Big[Y(t)\geq N-1-k\Big]-2\Prob_0\Big[Y(t-s)\geq N-1-k\Big].\end{multline*}
	Since $Y$ is a random walk sped up by $N^2$, the family of increments $(Y((k+1)/N^2)-Y(k/N^2))_{k=0,...,tN^2-1}$ is i.i.d. and each of those admits both second and third moments. We can therefore use the Berry-Esseen inequality to write
	\begin{align*}\Prob_0\Big[Y(t)\geq N-1-k\Big]&=\Prob_0\pa{\frac{Y(t)}{\sqrt{tN^2}}\geq \frac{N-1-k}{\sqrt{tN^2}}}\\
	&=1-{\mathcal N}\pa{\frac{N-1-k}{\sqrt{tN^2}}}+O\pa{\frac{1}{Nt^{1/2}}},\end{align*}	where ${\mathcal N}(u)$ is the distribution function of a standard Gaussian variable.
	We can therefore  also write
	\begin{align*}\Prob_0\Big[Y(t-s)\geq N-1-k\Big]=1-{\mathcal N}\pa{\frac{N-1-k}{\sqrt{(t-s)N^2}}}+O\pa{\frac{1}{N\sqrt{t-s}}}.\end{align*}
	These two identities allow us to write, since $p+1\leq k\leq N-1$ 
	\begin{align*}
	\Prob_k(  H^Y(\{N-1\})\in &[t-s,t])\\
	&={\mathcal N}\pa{\frac{N-1-k}{\sqrt{(t-s)N^2}}}-{\mathcal N}\pa{\frac{N-1-k}{\sqrt{tN^2}}}+O\pa{\frac{1}{N\sqrt{t-s}}}\\
	&\leq\frac{1}{\sqrt{2\pi}}\pa{\frac{N-1-k}{\sqrt{(t-s)N^2}}-\frac{N-1-k}{\sqrt{tN^2}}}+O\pa{\frac{1}{N\sqrt{t-s}}}\\
	&\leq\frac{1}{\sqrt{2\pi t}}\pa{\frac{\sqrt{t}}{\sqrt{(t-s)}}-1}+O\pa{\frac{1}{N\sqrt{t-s}}}	
	\end{align*}
One easily obtains after elementary computations a universal constant $C$ such that the first term in the right hand side above is less than $Cs/t^{3/2}$ thus concluding the proof of Lemma \ref{lem:BoundaryProba1}.
\end{proof}

We now use this technical Lemma to prove the following result, which will be needed to estimate the correlations function $\varphi_N$.

\begin{lemma}
	\label{lem:grad} There exists $\varepsilon_0>0$ such that, for any $0<\varepsilon<\varepsilon_0$,  there exists a constant $M$  independent of $N$, such that
	\[\sup_{\substack{t\in[N^{-\varepsilon},T]\\
			k\in \{p+1,\dots,N-1\}}}\abs{\rho_N(t,k+1)-\rho_N(t,k)}\leq M N^{-\frac 12 -\varepsilon}\]
\end{lemma}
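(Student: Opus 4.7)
The plan is to adapt the coupling technique used in Proposition \ref{lem:convrhop} and combine it with the hitting-time estimate Lemma \ref{lem:BoundaryProba1}. Starting from the dual representation \eqref{eq:dualrho}, I would couple two dual random walks $X^k$ and $X^{k+1}$ issued respectively from $k$ and $k+1$ by synchronous jumps while both lie in the bulk interval $\{p+1,\dots,N-1\}$; this coupling preserves $X^{k+1}(s)=X^k(s)+1$ up to the first time $\sigma$ at which one of the walks exits the bulk, at which instant necessarily $(X^k(\sigma),X^{k+1}(\sigma))\in\{(p,p+1),(N-1,N)\}$. Applying the strong Markov property at $\sigma\wedge t$ and rewriting the post-$\sigma$ contributions using \eqref{eq:dualrho} yields
\[
\rho_N(t,k+1)-\rho_N(t,k) = \E\cro{\pa{\rho_0\pa{\tfrac{X^k(t)+1}{N}}-\rho_0\pa{\tfrac{X^k(t)}{N}}}\ind{\sigma>t}} + \E\cro{\pa{\rho_N(t-\sigma,X^{k+1}(\sigma))-\rho_N(t-\sigma,X^k(\sigma))}\ind{\sigma\leq t}}.
\]

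On $\{\sigma>t\}$ the first term is dominated by $\|\rho_0'\|_\infty/N$ thanks to the $C^2$ regularity of $\rho_0$, which is much smaller than $MN^{-1/2-\varepsilon}$ for $\varepsilon<1/2$. In the second term the gap inside the expectation equals either $\beta-\rho_N(t-\sigma,N-1)$ (on $\{X^k(\sigma)=N-1\}$), directly controlled by \eqref{eq:convrhop3}, or $\rho_N(t-\sigma,p+1)-\rho_N(t-\sigma,p)$ (on $\{X^k(\sigma)=p\}$), which I rewrite as $(\rho_N(t-\sigma,p+1)-\alpha)-(\rho_N(t-\sigma,p)-\alpha)$: the first piece is bounded by \eqref{eq:convrhop1}, and the second by a direct extension of Proposition \ref{lem:convrhop} giving $|\rho_N(t,p)-\alpha|\leq KN^{\varepsilon'-1}/s$, which follows from repeating the excursion argument there with a walk started at $p$, since such a walk still has order-one probability per unit time of being absorbed at some reservoir $\dd_j$ before reaching the right cemetery $N$.

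The main obstacle is that these gap bounds are of the form $KN^{\varepsilon'-1}/(t-\sigma)$ and blow up as $\sigma\uparrow t$. To handle this, I decompose $\{\sigma\leq t\}$ dyadically along the variable $t-\sigma$, together with a careful use of Lemma \ref{lem:BoundaryProba1}: because up to time $\sigma$ the walk $X^k$ coincides with a simple random walk on $\Z$, a slight variant of the lemma (same proof, with the interval $\{p+1,N-1\}$ replaced by $\{p,N\}$) gives $\Prob[\sigma\in[t-u,t]]\leq C(u/t^{3/2}+1/(N\sqrt{t-u}))$. On each dyadic layer $\{t\cdot 2^{-i-1}<t-\sigma\leq t\cdot 2^{-i}\}$ for $0\leq i\leq i^*$ one uses the Proposition bound on the gap, while on the tail $\{t-\sigma\leq t\cdot 2^{-i^*-1}\}$ one uses the trivial bound that the gap is at most $1$. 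Summing the resulting geometric series and optimising the two parameters (the auxiliary exponent $\varepsilon'$ of Proposition \ref{lem:convrhop} and the dyadic cutoff $i^*\sim(1-\varepsilon/2-\varepsilon'/2)\log_2 N$) yields the claimed bound $MN^{-1/2-\varepsilon}$ for all $0<\varepsilon<\varepsilon_0$, with any $\varepsilon_0\leq 1/3$ small enough to ensure all parameter trade-offs close.
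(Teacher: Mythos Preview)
Your approach is essentially the paper's: the synchronous coupling of $X^k$, $X^{k+1}$ is precisely the duality representation for $g(t,k)=\rho_N(t,k+1)-\rho_N(t,k)$ (which solves the discrete heat equation in the bulk), and your dyadic decomposition in $t-\sigma$ plays the same role as the paper's four-region split of the hitting time at the two scales $a_N=N^{-1/2-\delta}$, $b_N=N^{-1/2+\delta}$.

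The only substantive difference is where you stop on the left. You halt when $X^k$ reaches $p$, so your left boundary datum is $\rho_N(\cdot,p+1)-\rho_N(\cdot,p)$; the paper instead stops the dual walk for $g$ at $p+1$, obtaining $\rho_N(\cdot,p+2)-\rho_N(\cdot,p+1)$, which is already covered by \eqref{eq:convrhop2} without any extension of Proposition~\ref{lem:convrhop}. Your claimed extension $|\rho_N(t,p)-\alpha|\le KN^{\varepsilon'-1}/t$ is in fact correct, but your justification is incomplete: ``order-one probability per unit time of being absorbed'' only shows convergence to \emph{some} value; you must also verify that the limiting absorption distribution from $p$ still produces $\alpha$. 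This does hold---for instance because the function $x\mapsto\sum_j\alpha_j\Prob_x[H(\partial\Lambda_N)=H(\{\dd_j\})]$ is $\Delta_N$-harmonic on $\{p+1,\dots,N-1\}$, so its value at $p$ is forced by the Laplacian relation at $p+1$---but it is not the one-line consequence you suggest. Stopping the coupling one step earlier (at $X^k=p+1$) avoids this point entirely and lands exactly on the paper's argument.
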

\begin{proof}[Proof of Lemma \ref{lem:grad}]This Lemma is a consequence of Proposition \ref{lem:convrhop} and Lemma \ref{lem:BoundaryProba1}. For any $k\in \{ p+1,\dots,N-1 \}$, let us denote 	
	\begin{equation}\label{eq:Defgh}g(t,k)=\rho_N(t,k+1)-\rho_N(t,k),\quad\mbox{ and  }\quad h(k)=g(0,k)=\rho_0(k+1/N)-\rho_0(k/N).\end{equation}
	Using equation \eqref{eq:eqrhoapprox}, we obtain that $g$ is solution to 
	\begin{equation}
	\label{eq:systgrad}
	\begin{cases}
	\partial_tg(t,k)=N^2(\Delta_Ng)(t,k) & \forall k\in \{ p+2,\dots,N-2 \}\\
	g(t,p+1)=\rho_N(t,p+2)-\rho_N(t,p+1)\\
	g(t,N-1)=\beta-\rho_N(t,N-1)\\
	g(0,.)=h(.)
	\end{cases}
	\end{equation}
Recall that we denoted by $Y$ a random walk on $\Z$, and that $H^Y(B)$ is the first time $Y$ hits the set $B\subset \Z$, and let $H^Y_t(B)=H^Y(B)\wedge t$. To keep notations simple, shorten 
\[H_t=H^Y_t(\{p+1,N-1\}).\]
Then, since $g$ satisfies \eqref{eq:systgrad}, we can write for any $(t,k)\in[0,T]\times\{ p+2,\dots,N-2 \} $  
	\begin{equation}
\label{eq:idg}
	g(t,k)=\E_k\Big[g(t-H_t,Y(H_t))\Big].
	\end{equation}
	According to Proposition \ref{lem:convrhop}, the more $H_t$ is close to $t$, the less control we have over the value of $g$ at the boundaries $p+1$ and $N-1$. However, the probability that $Y$ reaches either spatial boundary very close to time $t$ is small according to Lemma \ref{lem:BoundaryProba1}. To make this argument rigorous, we fix a small $\delta>0$, and let 
	\[a_N=N^{-\frac 12-\delta} \quad \mbox{and}\quad b_N=N^{-\frac 12+\delta}\]

	Fix $\delta'>\delta.$ We now consider four cases, represented in Figure \ref{fig:RWgrad}:
	\begin{figure}
		\input{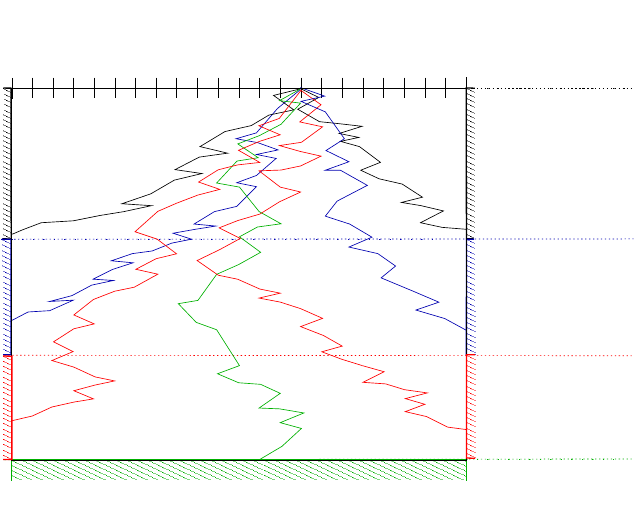tex_t}
		\caption{Representation of the four possible cases for the random walk $Y$.}
		\label{fig:RWgrad}
	\end{figure}
	\begin{itemize}
		\item If $0\leq H_t\leq t-b_N$, then the random walk has hit the black boundary. Furthermore, thanks to Proposition \ref{lem:convrhop}, we have a good control of the value of $g$ at the boundary, which is of order $N^{-\frac 12 -\delta'}$. 
		\item If $t-b_N\leq H_t\leq t-a_N$, the random walk hits the blue boundary, where we have some control of the value of $g$ thanks to Proposition \ref{lem:convrhop}. We also have control over the probability that $Y$ hits the blue boundary thanks to Lemma  \ref{lem:BoundaryProba1}, so that the overall contribution of this term is of order $N^{-1 +c}$ for some small constant $c$.
		\item If $ t-a_N\leq H_t< t$, we have no good control over the value of $g$ at the boundary red, which is a priori of order $1$. However the probability that $Y$ hits the red boundary is well controlled by Lemma \ref{lem:BoundaryProba1}.
			\item Finally, if $H_t=t$, the random walk reaches the green boundary (i.e. time $0$ for $g$), and we can write \[g(t-H_t,Y(H_t))=h(Y(H_t))=O(1/N).\]
	\end{itemize}
	More precisely, fix a small $\varepsilon>0$, we can write thanks to equation \eqref{eq:idg} for any $t\in [N^{-\varepsilon}, t]$ and any $\delta>0$
	\begin{align}\label{eq:decompcase}
	\abs{g(t,k)}=&\abs{\E_k\pa{g(t-H_t,Y(H_t))}}\nonumber\\
	\leq& \;\Prob_k(0\leq H_t\leq t-b_N)\sup_{s\in [b_N,t[}\abs{g(s,p+1)}\vee\abs{g(s,N-1)}\nonumber\\
	&+ \Prob_k( t-b_N\leq H_t\leq t-a_N)\sup_{s\in \cro{a_N,b_N}}\abs{g(s,p+1)}\vee\abs{g(s,N-1)}\nonumber\\
	&+ \Prob_k( t-a_N\leq H_t< t)\sup_{s\in \cro{0,a_N}}\abs{g(s,p+1)}\vee\abs{g(s,N-1)}\nonumber\\
	&+\Prob_k(H_t=t)\sup_{p+2\leq k\leq N-2}\abs{h(k)}
	\end{align}
	
	\medskip
	
	We now estimate each of these terms : according to Proposition \ref{lem:convrhop}, and by definition of the function $g$, we chose since $\delta'-\delta>0$, we can write for the first term
	\begin{multline}
	\label{eq:firstcase}
	\Prob_k(0\leq H_t\leq t-b_N)\sup_{s\in [b_N,t[}\abs{g(s,p+1)}\vee\abs{g(s,N-1)}\\
	\leq\sup_{s\in [b_N,t[}\abs{g(s,p+1)}\vee\abs{g(s,N-1)}\leq K \frac{N^{\delta'-\delta-1}}{b_N}\leq KN^{\delta'-2\delta-\frac 12}.
	\end{multline}

	\medskip
	
	Regarding the second term, we use this time both Proposition \ref{lem:convrhop} and Lemma \ref{lem:BoundaryProba1}. For any $t\in [N^{-\varepsilon},T] $, we let $t'=t-a_N\geq N^{-\varepsilon}-N^{-\frac 12 -\delta} $ and $s=b_N-a_N\leq 2N^{-\frac 12 +\delta}$, to obtain
		\begin{multline}
	\label{eq:secondcase}
	\Prob_k(t-b_N\leq H_t\leq t-a_N)\sup_{s\in [a_N,b_N[}\abs{g(s,p+1)}\vee\abs{g(s,N-1)}\\
	\leq C\pa{\frac{s}{t'^{3/2}}+\frac{1}{Nt'^{1/2}}}K \frac{N^{\delta'-\delta-1}}{a_N}\leq M_1 N^{\frac 32 \varepsilon +\delta'+\delta-1},
	\end{multline}
	for some constant $M_1$ depending on $C$ and $K$.
	
	\medskip
	
The third term is controlled by Lemma \ref{lem:BoundaryProba1}, and this time we fix $t\in [N^{-\varepsilon},T]$, and let $s=a_N$, to obtain
	\begin{multline}
	\label{eq:thirdcase}
	\Prob_k(t-a_N\leq H_t\leq t)\sup_{s\in [0,a_N[}\abs{g(s,p+1)}\vee\abs{g(s,N-1)}\\
	\leq\Prob_k(t-a_N\leq H_t\leq t)\leq  C\pa{\frac{s}{t^{3/2}}+\frac{1}{Nt^{1/2}}}\leq  M_2 N^{\frac 32 \varepsilon-\frac 12-\delta}.
	\end{multline}
	for some constant $M_2$ depending on $C$.
	
	\medskip
	
	Finally, since $\rho_0$ is was assumed smooth, we also have 
		\begin{align}
	\label{eq:fourthcase}
	\Prob_k(H_t=t)\sup_{p+2\leq k\leq N-2}\abs{h(k)}\leq \frac{\norm{\partial_u\rho_0}{\infty}}{N}.
		\end{align}

	We can now choose 	\[\delta=5\varepsilon/2,\quad \delta'=4\varepsilon>\delta\mbox{ and }\varepsilon_0=1/18 \]
	and inject the four bounds \eqref{eq:firstcase}, \eqref{eq:secondcase}, \eqref{eq:thirdcase} and \eqref{eq:fourthcase} in equation \eqref{eq:decompcase}, to finally obtain that for any $0<\varepsilon<\varepsilon_0$, and any $(t,k)\in[N^{-\varepsilon},T]\times\{ p+2,\dots,N-2 \} $ 
	\begin{equation*}
	\abs{g(t,k)}\leq(K+M_1+M_2+\norm{\partial_u\rho_0}_{\infty})N^{-\frac 12 -\varepsilon}.
	\end{equation*} 
	Letting $M=K+M_1+M_2+\norm{\partial_u\rho_0}_{\infty}$ then completes the proof of Lemma \ref{lem:grad}.
\end{proof} 
The previous estimate yields control over the gradient for macroscopic times of order $N^{\varepsilon}$, uniformly in $\Lambda_N$. We now estimate the gradient of the density for times very close to $0$. Since the initial density is not necessarily close to $\alpha$ at the left boundary, and to $\beta$ at the right boundary, the gradient of the density can be very steep at the boundaries close to the initial time. Away from the boundaries, however, for very small times, the discrete gradient of the density is very close to that of the initial density profile $\rho_0$, and is therefore of order $1/N$. We now make this statement rigorous.
\begin{lemma}
	\label{lem:grad2}
	 Let us denote $x_{N,\varepsilon}=N^{1-\varepsilon/4}$. For any $\varepsilon>0$, there exists a constant $M'=M'(\varepsilon, \norm{\partial_u\rho_0}_{\infty})$ such that
	\[\sup_{\substack{t\in[0,N^{-\varepsilon}]\\
			k\in \{x_{N,\varepsilon},\dots,N-x_{N,\varepsilon}\}}}\abs{\rho_N(t,k+1)-\rho_N(t,k)}\leq \frac{M'}{N}.\]
\end{lemma}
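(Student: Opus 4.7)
The approach would parallel that of Lemma \ref{lem:grad}, keeping the same notation $g(t,k)=\rho_N(t,k+1)-\rho_N(t,k)$, $h(k)=g(0,k)$, and the same random walk $Y$ on $\Z$ with jump rate $N^2$ and hitting time $H_t=H^Y(\{p+1,N-1\})\wedge t$. Again I would start from the duality formula \eqref{eq:idg}, $g(t,k)=\E_k[g(t-H_t,Y(H_t))]$, valid for $k\in\{p+2,\dots,N-2\}$. The twist compared with Lemma \ref{lem:grad} is that for very short times $t\leq N^{-\varepsilon}$ and starting points at distance at least $x_{N,\varepsilon}=N^{1-\varepsilon/4}$ from either boundary, $Y$ is very unlikely to reach $\{p+1,N-1\}$ at all, and the single ``good'' scenario $\{H_t=t\}$ should provide the bound by itself.

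Concretely, I would split the expectation along $\{H_t=t\}$ and $\{H_t<t\}$. On $\{H_t=t\}$ the integrand equals $h(Y(t))$, which satisfies $|h(j)|\leq \norm{\partial_u\rho_0}_{\infty}/N$ by smoothness of $\rho_0$; on $\{H_t<t\}$ I would use the trivial bound $|g(s,j)|\leq 1$ (since $g$ is the difference of two densities in $[0,1]$). This gives
\[|g(t,k)| \leq \frac{\norm{\partial_u\rho_0}_{\infty}}{N} + \Prob_k(H_t<t),\]
so everything reduces to showing $\Prob_k(H_t<t)\leq 1/N$ uniformly in the prescribed ranges of $k$ and $t$.

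For this last estimate, when $k\in\{x_{N,\varepsilon},\dots,N-x_{N,\varepsilon}\}$ and $N$ is large, both $k-(p+1)$ and $(N-1)-k$ exceed $x_{N,\varepsilon}/2$, hence $\{H_t<t\}\subset\{\sup_{s\leq t}|Y(s)-k|\geq x_{N,\varepsilon}/2\}$. I would then apply Doob's maximal inequality to the exponential martingale $\exp(\theta(Y(s)-k)-2sN^2(\cosh\theta-1))$, whose moment generating function is explicit since $Y-k$ is a difference of two independent rate-$N^2$ Poisson processes, and optimize in $\theta$. Because the typical displacement $\sqrt{2tN^2}\leq \sqrt{2}\,N^{1-\varepsilon/2}$ is much smaller than $x_{N,\varepsilon}/2\sim N^{1-\varepsilon/4}/2$, the resulting sub-Gaussian bound yields $\Prob_k(H_t<t)\leq C\exp(-cN^{\varepsilon/2})$ for some constants $c,C>0$, which is far smaller than $1/N$ once $N$ is large. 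Combining the two estimates gives the lemma with $M'=\norm{\partial_u\rho_0}_{\infty}+1$, after absorbing the finitely many small-$N$ values of the constant. There is no serious obstacle; the only step requiring a little care is the sub-Gaussian tail for the accelerated random walk, which can alternatively be extracted from the Berry--Esseen-type estimate used in Lemma \ref{lem:BoundaryProba1}.
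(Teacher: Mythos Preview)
Your proposal is correct and follows essentially the same route as the paper: duality for $g$, splitting on $\{H_t=t\}$ versus $\{H_t<t\}$, bounding the first contribution by $\norm{\partial_u\rho_0}_\infty/N$ and the second by a large-deviation estimate for the displacement of the accelerated random walk over time $N^{-\varepsilon}$ versus distance $N^{1-\varepsilon/4}$. The only cosmetic differences are that the paper states the exponential bound as $e^{-CN^{\varepsilon/4}}$ (you obtain the sharper $e^{-cN^{\varepsilon/2}}$ from the sub-Gaussian tail) and does not spell out the exponential-martingale/Doob argument explicitly.
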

\begin{proof}
	The proof of this statement also comes from duality. This time, however, the random walk $Y$ is started at a distance at least  $x_{N,\varepsilon}=N^{1-\varepsilon/4}$ from the boundary, so that the probability that in a macroscopic time of smaller than $N^{-\varepsilon}$ (i.e. in a microscopic time of order $N^{2-\varepsilon}$), it travels such a distance vanishes exponentially in $N^{\varepsilon/4}$.
	Once again, we shorten $H_t=H_t^Y(\{p+1, N-1\}).$ Recall from equation \eqref{eq:Defgh} the definitions of $g$ and $h$. Since $\abs{g}\leq 1$, following the same steps and using the same notations as in the previous Lemma, we can write for any $k\in \Lambda_N$ 
	\begin{align}
	\abs{g(t,k)}&\leq 2\Prob_k(H_t<t)+\Prob_k(H_t=t) \sup_{p+2\leq k\leq N-2}\abs{h(k)}\\
	& \leq 2\Prob_k(H_t<t) +\frac{\norm{\partial_u\rho_0}_{\infty}}{N}
	.\end{align}
	As mentioned before, for any $k\in \{x_{N,\varepsilon},\dots,N-x_{N,\varepsilon}\}$, $\Prob_k(H_t<t)$ is less than the probability that a rate $1$ symmetric random walk travels in a time $\delta t=N^{2-\varepsilon}$ a distance $\delta x=N^{1-\varepsilon/4}=N^{\varepsilon/4}\sqrt{\delta t  }$, which can be bounded by $e^{-C N^{\varepsilon/4}}$ for some positive constant $C$ depending only on $\varepsilon$ but not on the starting point $k\in \{x_{N,\varepsilon},\dots,N-x_{N,\varepsilon}\}$, which proves the Lemma.	
\end{proof}

\begin{corollary}
	\label{cor:grad}
	 There exists $\varepsilon>0$ and a constant $M_0$  independent of $N$, such that
	\[\sup_{\substack{t\in[0,T]\\
			k\in \{x_{N,\varepsilon},\dots,N-x_{N,\varepsilon}\}}}\Big\{\rho_N(t,k+1)-\rho_N(t,k)\Big\}^2\leq M_0 N^{-1 -2\varepsilon}\]
\end{corollary}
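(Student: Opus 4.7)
The plan is to decompose the supremum over $t \in [0,T]$ into the two temporal regimes $t \in [0, N^{-\varepsilon}]$ and $t \in [N^{-\varepsilon}, T]$, which are precisely the regimes handled by Lemma~\ref{lem:grad2} and Lemma~\ref{lem:grad} respectively, and then combine the resulting bounds.

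First, fix any $0 < \varepsilon < \varepsilon_0$ with $\varepsilon_0$ given by Lemma~\ref{lem:grad} and also satisfying $\varepsilon < 1/2$ (so that $N^{-2} \leq N^{-1-2\varepsilon}$ for $N$ large enough). For $t \in [N^{-\varepsilon}, T]$ and any $k \in \{p+1, \dots, N-1\}$ (a larger range than the one required in the corollary), Lemma~\ref{lem:grad} directly yields
\[
\abs{\rho_N(t,k+1) - \rho_N(t,k)}^2 \leq M^2 N^{-1-2\varepsilon}.
\]

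For the complementary regime $t \in [0, N^{-\varepsilon}]$ and $k \in \{x_{N,\varepsilon}, \dots, N - x_{N,\varepsilon}\}$, Lemma~\ref{lem:grad2} gives
\[
\abs{\rho_N(t,k+1) - \rho_N(t,k)}^2 \leq \frac{(M')^2}{N^2} \leq (M')^2 N^{-1-2\varepsilon},
\]
since $\varepsilon < 1/2$. Taking $M_0 = M^2 + (M')^2$ and passing to the supremum over both temporal regimes and the common spatial range $k \in \{x_{N,\varepsilon}, \dots, N - x_{N,\varepsilon}\}$ then yields the desired estimate. There is no genuine obstacle here; the corollary is a bookkeeping consequence of the two preceding lemmas, whose combination is designed precisely so that the bounds match at the crossover time $t = N^{-\varepsilon}$ and that the square of the gradient is of order $N^{-1-2\varepsilon}$ uniformly in time.
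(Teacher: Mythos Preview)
Your proposal is correct and takes essentially the same approach as the paper, which simply states that the corollary is an immediate consequence of Lemmas~\ref{lem:grad} and~\ref{lem:grad2} by choosing any $\varepsilon \leq \varepsilon_0$. You have merely spelled out the straightforward time-splitting at $t = N^{-\varepsilon}$ that the paper leaves implicit; note that since $\varepsilon_0 = 1/18$ in the proof of Lemma~\ref{lem:grad}, your additional requirement $\varepsilon < 1/2$ is automatically satisfied.
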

This corollary is an immediate consequence of Lemmas \ref{lem:grad} and \ref{lem:grad2}, by choosing any $\varepsilon \leq \varepsilon_0$.

\section{Estimation of the correlation function}
\label{sec:phi}
\subsection{Notations}
\label{sec:Correlations1} We now use Lemmas \ref{lem:grad} and \ref{lem:grad2} to estimate the correlations of the model. The estimation is stated in Proposition \ref{prop:BoundCor}, and uses similar tools as in the previous sections : we obtain a discrete differential system satisfied by the correlation function $\varphi_N$, and use duality to estimate $\varphi_N$ using two-dimensional random walk. 

\medskip

Recall from equation \eqref{eq:DefCorHydro} that we defined the correlation function 
\begin{equation}
\varphi_N(t,k,l)=\E_{\nu_N}\Big(\{\eta_k(t)-\rho_N(t,k)\}\{\eta_l(t)-\rho_N(t,l)\}\Big).
\end{equation}
We will denote the two-dimensional equivalents of one-dimensional devices by bolds characters.
In particular, we denote pairs of integers by $\bfk=(k, l)\in \Z^2$. For any $\bfk=(k, l)\in \Z^2$, let 
\[\abs{\bfk}=\abs{k-l},\quad \mbox{ and }\quad \norm{\bfk}=\abs{k}\vee\abs{l}.\]
For any $\bfk=(k,l)$, we denote $\boldsymbol \Delta_N$ the two-dimensional discrete Laplacian
\begin{align}
\label{eq:lap2D}
(\boldsymbol \Delta_N \varphi_N)(\bfk)=&\sum_{\bfk'\sim\bfk}(\varphi_N(\bfk')-\varphi_N(\bfk))\nonumber\\
=&\;\varphi_N(k+1, l)+\varphi_N(k-1, l)+\varphi_N(k, l+1)+\varphi_N(k, l-1)\nonumber \\
&-4\varphi_N(k, l),
\end{align}
and by $\boldsymbol \nabla_N\varphi$ the diagonal "gradient"
\begin{equation}
\label{eq:gradext}
(\boldsymbol \nabla_N\varphi)(\bfk)=\varphi_N(k-1, l)+\varphi_N(k, l+1)-2\varphi_N(k, l).
\end{equation}
For the convenience of notations, we will sometimes write $N^{3/4}$ instead of $\lfloor N^{3/4} \rfloor.$
\begin{figure}
	\input{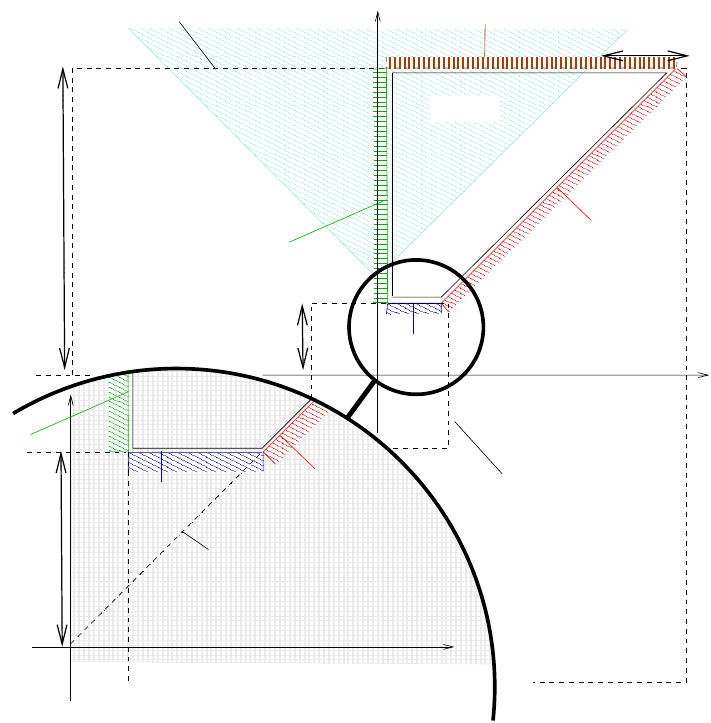tex_t}
	\caption{Representation of the bulk $B_N$ (black), the diagonal border $D_N$ (red), the vertical border $V_N$ (green),  the lower horizontal border $H_{l,N}$ (blue) and  the upper horizontal border $H_{u,N}$ (brown). }
	\label{fig:SDBL}
\end{figure}

\medskip

As represented in Figure \ref{fig:SDBL}, let us introduce the \emph{bulk} 
\[B_N=\left\{(k,l),\quad p+1< k<N^{3/4}\vee (l-1), \quad N^{3/4} < l <N \right\},\]
the \emph{diagonal boundary}
\[D_N=\left\{(k,k+1), \quad N^{3/4}\leq k\leq N-2\right\},\]
the left \emph{vertical boundary}
\[V_N=\left\{(p+1,l), \quad N^{3/4}< l< N\right\},\]
 the \emph{lower horizontal border}
\[H_{l,N}=\left\{(k,N^{3/4}), \quad p+1\leq k< N^{3/4}\right\},\]
and the \emph{upper horizontal border} 
\[H_{u,N}=\left\{(k,N), \quad p+1\leq k\leq N-2\right\}.\]
Finally, we denote $\partial B_N=V_N\cup H_{l,N}\cup H_{u,N}$. 

Our main result is the following, and states that for any positive $\delta$, at a distance of order $\delta N$ of both extremities of the diagonal, the correlations vanish uniformly as $N$ goes to $\infty$. Let us finally shorten $\0=(0,0)$ and $\bfN=(N,N)$.
\begin{proposition}
	\label{prop:BoundCor}
	For any $t\in [0,T]$, and any $\delta>0$
	\begin{equation*}
	\limsup_{N\to\infty}\sup_{\substack{\bfk\in B_N\\
			\norm{\bfk}, \norm{\bfk-\bfN}>\delta N}}\abs{\varphi_N(t,\bfk)}=0.
	\end{equation*}
\end{proposition}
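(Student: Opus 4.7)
The plan is to derive a closed discrete equation for $\varphi_N$ and then exploit duality with a two-dimensional random walk to turn the bound into an estimate on a source term which Corollary~\ref{cor:grad} shows to be small. An elementary computation using the generator $L_N$ and the definition \eqref{eq:DefCorHydro} of $\varphi_N$ shows that on the two-dimensional bulk $B_N$ (where $l>k+1$) the correlation function satisfies $\partial_t\varphi_N = N^2(\boldsymbol\Delta_N\varphi_N)$, while on the diagonal $D_N$ the swap of $\eta_k$ and $\eta_{k+1}$ preserves $\eta_k\eta_{k+1}$ and only the two \emph{outer} swaps contribute, which, after subtracting the appropriate $\rho_N$-quadratic terms, produces
\begin{equation*}
\partial_t\varphi_N(t,k,k+1) \;=\; N^2(\boldsymbol\nabla_N\varphi_N)(t,k,k+1) \;-\; N^2\bigl[\rho_N(t,k+1)-\rho_N(t,k)\bigr]^2.
\end{equation*}
The initial condition is $\varphi_N(0,\cdot)\equiv 0$ off the diagonal (since $\nu_N$ is a product measure), and $\varphi_N\equiv 0$ on $H_{u,N}$ because the right reservoir at density $\beta$ is deterministic.

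Next I would introduce a two-dimensional continuous-time random walk $\bfY=(Y_1,Y_2)$ on $B_N\cup D_N\cup\partial B_N$ whose generator is $N^2\boldsymbol\Delta_N$ in the bulk, reflected against $D_N$ (which precisely encodes the replacement of $\boldsymbol\Delta_N$ by $\boldsymbol\nabla_N$ along the diagonal) and absorbed on $\partial B_N$. Denote by $H$ the hitting time of $\partial B_N$. Duhamel's formula applied to the system above, together with the vanishing of $\varphi_N$ at $t=0$ and on $H_{u,N}$, yields for any $\bfk\in B_N$ and any $t\in[0,T]$
\begin{align*}
\varphi_N(t,\bfk) \;=\; &-\,N^2\int_0^t \E_\bfk\!\left[\bigl(\rho_N(t-s,Y_1(s)+1)-\rho_N(t-s,Y_1(s))\bigr)^{2}\mathbf{1}_{\bfY(s)\in D_N,\,s<H}\right]ds\\
&+\,\E_\bfk\!\left[\varphi_N(t-H,\bfY(H))\,\mathbf{1}_{H\leq t,\,\bfY(H)\in V_N\cup H_{l,N}}\right].
\end{align*}

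It then remains to bound the two contributions for $\bfk$ satisfying $\norm{\bfk},\,\norm{\bfk-\bfN}>\delta N$. For the source term I would split the diagonal into the \emph{good} range $\{x_{N,\varepsilon},\ldots,N-x_{N,\varepsilon}\}$ covered by Corollary~\ref{cor:grad}, where the integrand is bounded by $M_0 N^{1-2\varepsilon}$, and its complement of macroscopic length $N^{-\varepsilon/4}\to 0$ near the two corners of $D_N$. On the good range, the expected total occupation time of $D_N$ during $[0,T]$ is of order $T/N$ (a 2D walk sped up by $N^2$ spends a fraction $O(1/N)$ of its time within one step of any given 1D curve), yielding a contribution of order $T N^{-2\varepsilon}$; on the complement, the distance assumption on $\bfk$ ensures that $\bfY$ does not reach those corner regions within time $T$ with probability tending to one. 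The boundary term is controlled by the trivial bound $|\varphi_N|\leq 1$ and the probability that $\bfY$ exits through $V_N\cup H_{l,N}$, both contained in a $O(N^{3/4})$-neighborhood of the single corner $(p+1,N^{3/4})$ of $B_N$; since $\bfk$ is at macroscopic distance $\delta$ from that corner, this exit probability also vanishes.

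The main obstacle will be to make the occupation-time estimate rigorous and uniform in the starting point, probably by decomposing $\bfY$ into excursions from $D_N$: each excursion lasts $O(N^{-2})$ macroscopic time units while the walk is within one step of the diagonal, and $O(N^{-1})$ units in the bulk above, so that over $[0,T]$ the walk performs $O(NT)$ such excursions, each contributing $O(N^{-2})$ to the occupation time of $D_N$, which gives the desired total $O(T/N)$. Care must be taken that excursions absorbed at $\partial B_N$ simply drop from the sum, which is compatible with the heuristic. Combining with the choice $\varepsilon<\varepsilon_0$ from Corollary~\ref{cor:grad} and letting $N\to\infty$ then yields the uniform estimate claimed in Proposition~\ref{prop:BoundCor}.
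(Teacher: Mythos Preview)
Your overall architecture---derive the discrete system, couple with a reflected two-dimensional walk, apply Duhamel, split the diagonal source into a good part controlled by Corollary~\ref{cor:grad} and corner pieces the walk avoids---matches the paper's proof closely. The treatment of the source term and of $H_{l,N}$ is essentially the same as the paper's Lemmas~\ref{lem:Secondterm}--\ref{lem:thirdterm}.

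There is, however, a genuine gap in your handling of the boundary term on $V_N$. You assert that $V_N\cup H_{l,N}$ is ``contained in a $O(N^{3/4})$-neighborhood of the single corner $(p+1,N^{3/4})$'', but this is false for $V_N$: by definition $V_N=\{(p+1,l):\,N^{3/4}<l<N\}$ extends all the way up to $(p+1,N-1)$, at distance of order $N$ from that corner. Worse, the hypothesis $\norm{\bfk}>\delta N$ only forces $l>\delta N$; the first coordinate $k$ can be as small as $p+2$, so the starting point can sit one step away from $V_N$ and the exit probability through $V_N$ is not small at all. The trivial bound $|\varphi_N|\le 1$ on $V_N$ is therefore useless here.

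The paper closes this gap with an additional, nontrivial ingredient (Lemma~\ref{lem:cN}): it proves directly that $c_N:=\sup_{t,\bfk\in V_N}|\varphi_N(t,\bfk)|\to 0$. The argument is a separate duality estimate for the correlation between site $p+1$ and a site $l>N^{3/4}$: one couples two one-dimensional dual walkers $X_1$ (started at $p+1$) and $X_2$ (started at $l$) and shows that $X_1$ is absorbed in a left cemetery state $\dd_j$ with high probability before the pair can come within distance $p+1$ of each other, so the two-point function at $(p+1,l)$ is driven to zero. Only after this does the paper bound the $V_N$-contribution by $c_N$ and proceed as you do for $H_{l,N}$. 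You will need some version of this step; without it the boundary term does not vanish.
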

\begin{proof}[Proof of Proposition \ref{prop:BoundCor}]
For any $(t,\bfk)\in [0,T]\times\BFL$
\[\partial_t\varphi_N(t,\bfk)=N^2\E_{\nu_N}\Big[L_N\{\eta_k(t)-\rho_N(t,k)\}\{\eta_l(t)-\rho_N(t,l)\}\Big].\]
We will use the notation $\eta_N(t)=\beta=\rho_N(N)$, so that we can extend the definition of $\varphi_N$ for any time $t$ and any $\bfk=(k,N)$ in the upper boundary $H_{u,N}$, and let 
\[\varphi_N(t,\bfk)=0.\]
With this notation, which defines $\varphi_N$ at the upper boundary $H_{u,N}$, elementary computations then yield for any $t\in [0,T]$, and  any $\bfk$ in $B_N$ 
\[\partial_t\varphi_N(t,\bfk)=N^2(\boldsymbol\Delta_N\varphi_N)(t,\bfk)\]
where $\boldsymbol\Delta_N$ is the discrete two-dimensional Laplacian introduced earlier.
We obtain in the same way, for any $\bfk\in D_N$, that  
\[ \partial_t\varphi_N(t,\bfk)=N^2(\boldsymbol\nabla_N\varphi_N)(t,\bfk))-N^2m(t,\bfk),\]
where we denoted for $\bfk=(k,k+1)\in D_N$
\[m(t,\bfk)=\big\{\rho_N(t,k+1)-\rho_N(t,k)\big\}^2,\]
and $\boldsymbol\nabla_N$ is the gradient introduced in equation \eqref{eq:gradext}, representing reflection at the diagonal.
We do not know yet the value of $\varphi_N$ neither on the vertical boundary nor on the lower horizontal boundary $H_{l,N}$. However, we already obtained  the behavior at the diagonal boundary $D_N$.

\medskip

We started our process from a product measure, so that there are no correlations at time $0$. By the previous statements, the correlation function $\varphi_N$ is therefore solution to the discrete difference system
\begin{equation}
\label{eq:eqcor1}
\begin{cases}
\partial_t\phi(t,\bfk)=N^2(\boldsymbol\Delta_N\phi)(t,\bfk) &\forall (t, \bfk) \in [0,T]\times B_N\\
\partial_t\phi(t,\bfk)=N^2(\boldsymbol\nabla_N\phi)(t,\bfk)-N^2m(t, \bfk) &\forall (t, \bfk) \in [0,T]\times D_N\\
\phi(t, \bfk)=\varphi_N(t, \bfk) & \forall (t, \bfk) \in [0,T]\times (V_N\cup H_{l,N} )\\
\phi(t, \bfk)=0 &\forall (t, \bfk) \in [0,T]\times H_{u,N}\\
\phi(0, \bfk)=0&\forall \bfk\in B_N
\end{cases}.
\end{equation}
Note in particular that the third line gives no informations, but we include it in order to write a discrete difference system with complete boundary conditions.
Like we did for the density, we are going to pair $\varphi_N$ with a random walk $\bfX$.

\subsection{Pairing with a random walk}

We introduce the infinite diagonal
\[\bar D:=\{(k,k+1),\quad  k\in \Z\}.\]
Note in particular that $D_N\subset \bar D$. We denote by $\bfX$ a random walk on $\Z^2$ driven by the generator $N^2\gene$, where for any function $f:\Z^2\to {\bb R}$
\begin{equation}\label{eq:genebfX}(\gene f)({\bf x})=\ind{\bfx\notin \bar D}(\boldsymbol \Delta_Nf)(\bfx)+\ind{\bfx\in \bar D}(\boldsymbol \nabla_Nf)(\bfx).\end{equation}
In other words, $X$ performs a symmetric random walk in $\Z^2$, and is reflected when hitting $\bar D$.
We also denote by $\Prob_{\bfk}$ the distribution of this random walk, started from $\bfk$, and by $\E_{\bfk}$ the corresponding expectation. Similarly to the one-dimensional notations, for any set $S$, we  denote by $\bfH(S)$ the hitting time of $S$ and let $\bfH_t(S)=\bfH(S)\wedge t$. 
By duality, analogously to the previous section, since $\varphi_N$ is solution of \eqref{eq:eqcor1}, we can then write for any $(t,\bfk)\in[0,T]\times B_N$
\begin{multline}
\label{eq:phiX}
\varphi_N(t,\bfk)=\E_{\bfk}\Bigg[\varphi_N\Big(t-\bfH_t(\partial B_N), \bfX(\bfH_t(\partial B_N))\Big)\\
-N^2\int_{s=0}^{\bfH_t(\partial B_N)}\ind{\bfX(s)\in D_N}m(t-s,\bfX(s))ds\Bigg].
\end{multline}
Let us denote 
\begin{equation}
\label{eq:DefcN} 
c_{N}=\sup_{\substack{t\in [0,T]\\
		\bfk\in V_N}}\abs{\varphi_N(t,\bfk)},
\end{equation}
and note that $\abs{\varphi_N(t,\bfk)}\leq 1$ for any $t$ and any $\bfk$.
Equation \eqref{eq:phiX} yields that for any $(t,\bfk)\in[0,T]\times B_N$
\begin{equation}
\label{eq:phiXpsi}
\abs{\varphi_N(t,\bfk)}\leq\psi_N(t,\bfk)+N^2\E_{\bfk}\pa{\int_{s=0}^{\bfH_t(\partial B_N)}\ind{\bfX(s)\in D_N}m(t-s,\bfX(s))ds},
\end{equation}
where $\psi_N$ is solution to the system
\begin{equation}
\label{eq:Systempsi}
\begin{cases}
\partial_t\phi(t,\bfk)=N^2(\boldsymbol\Delta_N\phi)(t,\bfk) &\forall (t, \bfk) \in [0,T]\times B_N\\
\partial_t\phi(t,\bfk)=N^2(\boldsymbol\nabla_N\phi)(t,\bfk) &\forall (t, \bfk) \in [0,T]\times D_N\\
\phi(t, \bfk)=c_N & \forall (t, \bfk) \in [0,T]\times V_N\\
\phi(t, \bfk)=1 & \forall (t, \bfk) \in [0,T]\times H_{l,N}\\
\phi(t, \bfk)=0 &\forall (t, \bfk) \in [0,T]\times H_{u,N}\\
\phi(0, \bfk)=0&\forall \bfk\in B_N
\end{cases}.
\end{equation}
The only difference with 
\eqref{eq:eqcor1} is that we dropped the diagonal increment $m$, and crudely bounded $\varphi_N$ by $c_N$ on $V_N$ and  by $1$ on $H_{l, N}$.

\medskip

In Corollary \ref{cor:grad}, we obtained control over the value of the increment $m(t,k, k+1)$ for $t\in [0,T]$ and $k\in \{N^{1-\varepsilon/4},\dots,N-N^{1-\varepsilon/4}\}.$ However, close to the extremities of $D_N$, $m$ is a priori of order $1$, which is an issue due to the factor $N^2$ in front the increment term. We are therefore  going to kill the random walk $\bfX$ when it gets close to either one of the extremities of $D_N$ before $\bfH(\partial B_N)$, and prove that the difference made by doing so is small.

\medskip

Fix $\varepsilon>0$ given by Corollary \ref{cor:grad}, we define 
\begin{equation}
\label{eq:DefDNe}
\widetilde{D}_{N,\varepsilon}=\{(k,k+1)\in D_N, \quad k\leq N^{1-\varepsilon/4}\mbox{ or }k\geq N-N^{1-\varepsilon/4}\},
\end{equation}
which is the part of the diagonal $D_N$ where we do not have sufficient control over the diagonal increment $m$.  Shorten
 \begin{equation}
  \label{eq:Deftautilde} 
\bfH_t^{\varepsilon}=\bfH_t(\partial B_N\cup\widetilde{D}_{N,\varepsilon}),
\end{equation}
By killing the random walk $\bfX$ at the extremities of the diagonal, we make sure that it does not spend time in the part of $D_N$ where the function $m$ is not well controlled.
We can write for this new stopping time
\begin{equation}
\label{eq:phiX2}
\varphi_N(t,\bfk)=\E_{\bfk}\pa{\varphi_N(t-\bfH_t^{\varepsilon}, \bfX(\bfH_t^{\varepsilon}))-N^2\int_{s=0}^{\bfH_t^{\varepsilon}}\ind{\bfX(s)\in D_N}m(t-s,\bfX(s))ds}.
\end{equation}
We already pointed out that for any $(t, \bfk)\in [0,T]\times \Lambda_N$,  
\[\E_{\bfk}\Big[\big|\varphi_N(t-\bfH_t(\partial B_N), \bfX(\bfH_t(\partial B_N)))\big|\Big]\leq\psi_N(t,\bfk),\]
where $\psi_N$ is the solution to \eqref{eq:Systempsi}. Since $\varphi_N$ is bounded in absolute value by $1$, the bound above yields
\[\E_{\bfk}\pa{\abs{\varphi_N(t-\bfH_t^{\varepsilon}, \bfX(\bfH_t^{\varepsilon}))}}\leq\psi_N(t,\bfk)+2\Prob_{\bfk}\Big[\bfH(\partial B_N)>\bfH(\widetilde{D}_{N, \varepsilon})\Big].\]
Thanks to \eqref{eq:phiX2}, we can therefore write 
\begin{multline*}
\abs{\varphi_N(t,\bfk)}\leq\psi_N(t,\bfk)+2\Prob_{\bfk}\Big[\bfH(\partial B_N)>\bfH(\widetilde{D}_{N, \varepsilon})\Big]\\
+N^2\E_{\bfk}\pa{\int_{s=0}^{\bfH_t^{\varepsilon}}\ind{\bfX(s)\in D_N}m(t-s,\bfX(s))ds}.
\end{multline*}
Proposition \ref{prop:BoundCor}  follows from this estimate and Lemmas \ref{lem:firstterm}, \ref{lem:Secondterm} and \ref{lem:thirdterm}  below.
\end{proof}

\begin{lemma}
	\label{lem:firstterm}
	For any $t\in [0,T]$,  and any $\delta>0$
	\[\limsup_{N\to\infty}\sup_{\substack{\bfk\in B_N\\
	\norm{\bfk}>\delta N}}\psi_N(t,\bfk)=0.\]
\end{lemma}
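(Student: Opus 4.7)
The plan is to exploit duality between $\psi_N$ and the reflected walk $\bfX$, in the same spirit as Section~\ref{sec:rho}. Since $\psi_N$ solves the linear system \eqref{eq:Systempsi}, I would start from the Feynman--Kac representation
\[
\psi_N(t,\bfk)=\E_\bfk\!\left[g_N\big(\bfX(\bfH_t(\partial B_N))\big)\right],
\]
where $g_N$ equals $c_N$ on $V_N$, $1$ on $H_{l,N}$, $0$ on $H_{u,N}$, and $0$ whenever $\bfH_t(\partial B_N)=t$. Decomposing by which face of $\partial B_N$ is hit first gives
\[
\psi_N(t,\bfk)\le c_N\,P_V(t,\bfk)+P_L(t,\bfk),
\]
where $P_V$ and $P_L$ denote the hitting probabilities of $V_N$ and $H_{l,N}$ respectively, so the proof reduces to controlling each term uniformly for $\bfk\in B_N$ with $\norm{\bfk}>\delta N$.

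The key geometric observation is that $H_{l,N}$ is contained in the $N^{3/4}\times N^{3/4}$ corner near $(0,0)$ of $\bar\Lambda_N^2$, which rescales to the single point $(0,0)$. Under the diffusive rescaling $(k,l)\mapsto(k/N,l/N)$, $\bfX/N$ converges to a two-dimensional Brownian motion on the triangle $\{0<u<v<1\}$ reflected at the diagonal, and since planar Brownian motion does not charge points, $P_L$ should vanish uniformly for $\bfk$ in any compact subset of that triangle bounded away from $(0,0)$, which is exactly the regime $\norm{\bfk}>\delta N$. Quantitatively, I would bound $P_L$ by the discrete harmonic measure of a ball of radius $N^{3/4}$ in a planar box of side $N$ seen from distance $\delta N$; the classical logarithmic potential estimate then yields $P_L=O(1/\log N)$. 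Because $\norm{\bfk}>\delta N$ keeps the starting point macroscopically away from the reflecting diagonal $D_N$, the reflection can be handled by coupling $\bfX$ with a free planar walk up to the first exit of a macroscopic neighborhood of $\bfk$, in which no reflection event occurs.

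This reduces the Lemma to the contribution $c_N P_V\le c_N$, so the last step is to show $c_N\to 0$. A natural route is to apply the same duality strategy directly to $\varphi_N(t,p+1,l)$: the corresponding two-point dual process consists of two walkers started from the macroscopically separated sites $p+1$ and $l>N^{3/4}$, and the probability that they meet (the only way to produce correlations) or that both reach a reservoir during $[0,T]$ is again a small-target hitting probability in a two-dimensional configuration space, controlled by the same kind of logarithmic estimate; alternatively one can appeal to the hydrostatic bounds established in~\cite{ELX2017}. The main obstacle I foresee is precisely the fine analysis of the 2D hitting probability near the singular corner $(0,0)$ where $H_{l,N}$ and the reflection set $D_N$ touch, but the hypothesis $\norm{\bfk}>\delta N$ decouples the starting point from that corner and a suitable localization reduces the estimate to a free-walk computation of a standard type.
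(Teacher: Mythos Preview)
Your overall architecture matches the paper's: represent $\psi_N$ via the dual walk, split according to which face of $\partial B_N$ is hit, reduce to $c_N\to 0$ plus a vanishing bound on the probability $P_L$ of hitting the ``small corner'' $H_{l,N}$, and then prove $c_N\to 0$ separately. The logarithmic estimate $P_L=O(1/\log N)$ is also exactly what the paper obtains.

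There is, however, a genuine gap in your treatment of the reflection. You write that ``$\norm{\bfk}>\delta N$ keeps the starting point macroscopically away from the reflecting diagonal $D_N$'', and propose to couple $\bfX$ with a free planar walk in a neighborhood where no reflection occurs. This is false: in the paper's notation $\norm{\bfk}=|k|\vee|l|$, so $\norm{\bfk}>\delta N$ only says $\bfk$ is far from the \emph{origin}; the distance to the diagonal is $|\bfk|=|k-l|$, and nothing prevents $\bfk\in D_N$ (e.g.\ $\bfk=(\lfloor N/2\rfloor,\lfloor N/2\rfloor+1)$). Thus the walk may reflect immediately and repeatedly, and your localization/coupling argument does not go through. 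The paper resolves this by a symmetrization trick: it introduces the walk $\bfX^\sigma$ (generator \eqref{eq:genebfXsigma}) obtained by unfolding the reflection across $\bar D$, couples it with $\bfX$ so that hitting times of symmetric sets coincide (equation \eqref{eq:tausigma2}), and then bounds $P_L$ by the probability that a \emph{free} planar walk hits $\partial E_{N^{7/8}}$ before $\partial E_{2N}$, to which the standard logarithmic estimate applies. You need some device of this kind; the reflection cannot simply be ignored.

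A smaller point: your sketch for $c_N\to 0$ (``small-target hitting probability in a two-dimensional configuration space, controlled by the same logarithmic estimate'') is not how the paper proceeds, and it is not clear it would work as stated. The paper's Lemma~\ref{lem:cN} is a one-dimensional separation-of-timescales argument: the first coordinate, started at $p+1$, is absorbed in a cemetery state in time $o(N^{-3/4})$, well before the second coordinate (started beyond $N^{3/4}$) can travel the distance $N^{3/4}$ needed for the two walkers to interact. Once absorption occurs, the correlation vanishes exactly. This is sharper and more direct than a 2D harmonic-measure bound.
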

\begin{lemma}
	\label{lem:Secondterm}
For $\varepsilon>0$ given by Corollary \ref{cor:grad}, and for any $t\in [0,T]$,
	\begin{equation}
	\label{eq:timespentDN}
	\limsup_{N\to\infty}\sup_{\bfk\in B_N}N^2\E_{\bfk}\pa{\int_{s=0}^{\bfH_t^{\varepsilon}}\ind{\bfX(s)\in D_N}m(t-s,\bfX(s))ds}=0.
	\end{equation}
\end{lemma}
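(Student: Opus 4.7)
\emph{Proof proposal.} The plan is to exploit the killing built into $\bfH_t^\varepsilon$. As long as $s<\bfH_t^\varepsilon$ the walk $\bfX(s)$ has not visited $\widetilde{D}_{N,\varepsilon}$, so whenever $\bfX(s)\in D_N$ it in fact lies in $D_N\setminus\widetilde{D}_{N,\varepsilon}$. On this set, the first coordinate $k$ of the pair $(k,k+1)$ satisfies $x_{N,\varepsilon}<k<N-x_{N,\varepsilon}$, and Corollary \ref{cor:grad} yields the deterministic, uniform bound $m(t-s,\bfX(s))\leq M_0 N^{-1-2\varepsilon}$. Pulling this constant out of the expectation, and enlarging both the indicator (from $D_N$ to the full infinite diagonal $\bar D$) and the stopping time (from $\bfH_t^\varepsilon$ to $T$), both of which only increase the occupation time, will reduce the lemma to proving
\[\sup_{\bfk\in B_N}\E_{\bfk}\pa{\int_0^T\ind{\bfX(s)\in \bar D}\,ds}=O(N^{-1}),\]
since combining this with the prefactor $M_0 N^{1-2\varepsilon}$ then produces an overall bound of order $N^{-2\varepsilon}$, which vanishes as $N\to\infty$.

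The key tool for this reduction is to project the two-dimensional walk $\bfX(s)=(k(s),l(s))$ onto its (nonnegative) distance to the diagonal
\[D(s):=l(s)-k(s)-1\geq 0,\]
and to observe that $\{\bfX(s)\in\bar D\}=\{D(s)=0\}$. Inspecting the generator \eqref{eq:genebfX} directly: when $D>0$ each of the four coordinate jumps contributes $\pm 1$ to $D$, so $D$ jumps by $\pm 1$ each at rate $2N^2$; when $D=0$, the two allowed jumps both send $D$ to $1$, so $D$ moves deterministically from $0$ to $1$ at total rate $2N^2$. In other words, $D$ is a (nearly standard) continuous-time reflected simple random walk on $\N$, sped up by $O(N^2)$, with a slightly slower waiting time at the reflecting site.

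It then remains to bound the expected time spent at $\{0\}$ by $D$ up to time $T$. This is a classical local-time estimate: the embedded discrete-time chain performs of order $N^2 T$ jumps by time $T$, and since its local time at $0$ coincides with that of the absolute value of an ordinary symmetric random walk on $\Z$, the number of visits to $0$ is of order $N\sqrt{T}$; each visit to $0$ has mean duration $1/(2N^2)$, which yields an expected total time at $0$ of order $\sqrt{T}/N$, that is $O(1/N)$. A strong Markov argument at the first hitting time of $0$ makes this bound uniform in the starting point $\bfk\in B_N$, since a walk starting away from the diagonal can only see fewer visits than one starting on it.

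The main (mild) technical point to watch is the discrepancy between the total jump rate at the reflecting site $D=0$ (namely $2N^2$) and the bulk rate ($4N^2$): this only affects the estimate by a harmless constant and not its order of magnitude. Apart from this, the argument is a routine combination of the gradient estimate of Corollary \ref{cor:grad} with a classical local-time bound for one-dimensional reflected random walks, and I do not foresee any serious technical obstacle.
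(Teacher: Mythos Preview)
Your proposal is correct and takes a genuinely different route from the paper after the first common step.

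Both arguments begin identically: use the killing at $\widetilde D_{N,\varepsilon}$ together with Corollary~\ref{cor:grad} to replace $m$ by the uniform bound $M_0 N^{-1-2\varepsilon}$, reducing the problem to controlling $N^{1-2\varepsilon}$ times the expected time the walk spends on the diagonal. From there the methods diverge. The paper first symmetrizes the picture with respect to $\bar D$ via the auxiliary walk $\bfX^\sigma$, enlarges the stopping time to the exit time of the box $E_{2N}$, and then bounds the resulting two-dimensional Green's function pointwise by $1/p_{\0,N}\sim \log N$; summing over the $O(N)$ sites of $D_N$ and dividing by the holding rate $2N^2$ gives an occupation time of order $(\log N)/N$. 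You instead observe that the distance $D(s)=l(s)-k(s)-1$ is itself a one-dimensional reflected random walk on $\bb N$ (because $\gene$ depends on $\bfx$ only through $\ind{\bfx\in\bar D}$), so the occupation time on $\bar D$ is exactly the local time of $D$ at~$0$; a standard estimate (e.g.\ via the martingale $D(t)-2N^2\!\int_0^t\ind{D(s)=0}\,ds$ combined with $\E_0[D(T)^2]\le 4N^2T$) gives $\E[L^0_T]\le \sqrt T/N$. Your bound is slightly sharper (no $\log N$) and more elementary, since it avoids the symmetrization and the two-dimensional potential-theory input; the paper's approach, on the other hand, does not rely on the lucky fact that the distance to the diagonal is autonomously Markov, and would transfer more readily to settings where such a one-dimensional projection is unavailable.
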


\begin{lemma}
	\label{lem:thirdterm}
	For any $\delta>0$, and any $\varepsilon>0$
	\[\limsup_{N\to\infty}\sup_{\substack{(t,\bfk)\in [0,T]\times B_N\\\norm{\bfk}, \norm{\bfk-\bfN}> \delta N}}\Prob_{\bfk}\Big[\bfH(\partial B_N)>\bfH(\widetilde{D}_{N, \varepsilon})\Big]=0.\]
\end{lemma}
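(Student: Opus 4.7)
My plan is to bound $h_N(\bfk) := \Prob_{\bfk}[\bfH(\widetilde D_{N,\varepsilon}) < \bfH(\partial B_N)]$ by producing an explicit super-harmonic majorant built from the two-dimensional random walk potential kernel, which captures the standard logarithmic decay of harmonic measure of small sets in dimension 2. The set $\widetilde D_{N,\varepsilon}$ decomposes into a lower piece $\widetilde D^-_{N,\varepsilon}$ clustered near $\bfk_- := (\lfloor N^{3/4} \rfloor, \lfloor N^{3/4} \rfloor + 1)$ and an upper piece $\widetilde D^+_{N,\varepsilon}$ clustered near $\bfk_+ := (N-2, N-1)$. A union bound reduces the problem to bounding $h^{\pm}_N(\bfk) := \Prob_\bfk[\bfH(\widetilde D^{\pm}_{N,\varepsilon}) < \bfH(\partial B_N)]$ separately.

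The reflection at $\bar D$ is handled by the following observation. The map $\sigma(k,l) := (l-1, k+1)$ is an involutive Euclidean isometry of $\Z^2$ fixing $\bar D$ pointwise, and for any $\sigma$-invariant $f$ one has $f(k+1,k+1) = f(k,k+2)$ and $f(k,k) = f(k-1,k+1)$, so from \eqref{eq:lap2D} and \eqref{eq:gradext},
\[
(\boldsymbol \Delta_N f)(k,k+1) \;=\; 2\,(\boldsymbol\nabla_N f)(k,k+1).
\]
Hence any $\sigma$-invariant function that is (super-)harmonic for $\boldsymbol\Delta_N$ on $B_N \cup \bar D$ is automatically (super-)harmonic for $\gene$ there.

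Let $a : \Z^2 \to [0,\infty)$ be the standard two-dimensional simple random walk potential kernel: $(\boldsymbol\Delta a)(\bfx) = 0$ for $\bfx \neq 0$, $a(0) = 0$, and $a(\bfx) = \tfrac{2}{\pi}\log\norm{\bfx} + \kappa + o(1)$ as $\norm{\bfx}\to\infty$. Since $\sigma$ is an isometry fixing $\bfk_\pm$ (both in $\bar D$), $a(\,\cdot\, - \bfk_\pm)$ is $\sigma$-invariant. Setting $M^\pm := \max\{a(\bfk' - \bfk_\pm) : \bfk' \in \partial B_N\}$, I take
\[
f_\pm(\bfk) := \frac{4\pi}{\varepsilon\log N}\,\bigl(M^\pm - a(\bfk - \bfk_\pm)\bigr).
\]
By the previous paragraph, $\gene f_\pm = 0$ on $B_N \cup (D_N \setminus \{\bfk_\pm\})$, hence on $B_N \cup (D_N \setminus \widetilde D^\pm_{N,\varepsilon})$; on $\partial B_N$, $f_\pm \geq 0$ by definition of $M^\pm$; on $\widetilde D^\pm_{N,\varepsilon}$, where $\norm{\bfk-\bfk_\pm} \leq 2N^{1-\varepsilon/4}$, a direct logarithmic estimate yields $M^\pm - a(\bfk-\bfk_\pm) \geq \tfrac{\varepsilon}{2\pi}\log N + O(1)$, so $f_\pm \geq 2 \geq 1$ for $N$ large. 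The maximum principle for $\gene$ then yields $h^\pm_N \leq f_\pm$.

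Finally, for starting points with $\norm{\bfk}, \norm{\bfk-\bfN} > \delta N$ we have $\norm{\bfk-\bfk_\pm} \geq c\delta N$ for some $c>0$, whence $a(\bfk-\bfk_\pm) = \tfrac{2}{\pi}\log N + O(1) = M^\pm + O(1)$ and thus $f_\pm(\bfk) = O(1/\log N)$, which vanishes uniformly as $N \to \infty$. The only genuinely non-routine step is the $\sigma$-invariance identity transferring $\boldsymbol\Delta_N$-harmonicity to $\gene$-harmonicity at the reflecting diagonal; once that is in place, the rest is a standard 2D harmonic measure estimate and the shape of $B_N$ near $\bfk_\pm$ enters only through harmless constants.
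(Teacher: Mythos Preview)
Your argument is correct. Both your proof and the paper's rely on the same two ingredients: the $\sigma$-symmetry about $\bar D$ to neutralize the reflection, and the logarithmic hitting estimates of 2D simple random walk. The packaging differs, however. The paper introduces the symmetrized walk $\bfX^\sigma$ (generator \eqref{eq:genebfXsigma}), uses the coupling \eqref{eq:tausigma2} to rewrite the event probabilistically, covers $\widetilde D_{N,\varepsilon}$ by two small boxes $E_{\ell_N}(\0)$ and $E_{\ell_N}(\bfN)$, enlarges $\partial B_N^\sigma$ to the outer box $\partial E_{2N}$, and then quotes the standard annulus hitting-probability formula (Lawler, Exercise~1.6.8). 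You instead stay with the reflected walk and build an explicit $\sigma$-invariant super-harmonic majorant from the potential kernel $a$, using the identity $(\boldsymbol\Delta_N f)(k,k+1)=2(\boldsymbol\nabla_N f)(k,k+1)$ for $\sigma$-invariant $f$ to transfer $\boldsymbol\Delta_N$-harmonicity to $\gene$-harmonicity on $D_N$. Your route is a bit more self-contained (no external citation, no auxiliary coupled process), while the paper's route makes the reduction to a classical estimate more transparent; the underlying analytic content---$M^\pm-a(\bfk-\bfk_\pm)=O(1)$ when $\norm{\bfk-\bfk_\pm}\ge c\delta N$ versus $\ge \tfrac{\varepsilon}{2\pi}\log N+O(1)$ when $\norm{\bfk-\bfk_\pm}\le 2N^{1-\varepsilon/4}$---is exactly the logarithm ratio the paper invokes.
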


For the sake of clarity, we prove these three results in separate sections, before completing the proof of Theorem \ref{thm:HydroGen}. To prove these Lemmas, however, the reflected boundary condition at $D_N$ is not convenient. To solve this issue, recall that we defined 
\[\bar D:=\{(k,k+1),\quad  k\in \Z\}\supset D_N,\]
we now introduce the symmetry operator $\sigma:\Z^2 \to\Z^2$ w.r.t. $\bar D$,
\[\sigma(k,l)=(l-1,k+1).\]
 We are going to make all the items already introduced symmetric w.r.t. $\bar D$. For any set $S\subset \Z^2$ denote 
 \[S^{\sigma}=S\cup\sigma S,\]
 and for any function $f$ defined on some subset $S\subset\{(k,l)\in \Z^2,\quad  k<l\}$ of the half plane above the line $D_N$, we extend it as a function $f^\sigma$ on $S^{\sigma}$ by symmetry, by letting for any $\bfk\in S$
\[f^\sigma(\sigma \bfk)=f(\bfk).\]
	For any $\bfk\in B_N$, we denote by $\bfX^\sigma$ a random walk on $\Z^2$, started from $\bfk$ and driven by the generator $N^2\gene^\sigma$, where for any function $f:\Z^2\to {\bb R}$
\begin{equation}\label{eq:genebfXsigma}(\gene^\sigma f)({\bf x})=\ind{\bfx\notin \bar D}(\boldsymbol \Delta_Nf)(\bfx)+\frac{1}{2}\ind{\bfx\in \bar D}(\boldsymbol \Delta_Nf)(\bfx).\end{equation}

We will denote with exponents $\sigma$ all the corresponding quantities relative to $\sigma$. Note in particular that $\bfX^\sigma$ is no longer reflected at $\sigma$, but it is rather reflected at rate 1/2 and crosses $\bar D$ at rate $1/2$.
With the exception of the time spent on $\bar D$, which is double the time spent in any other place, $\bfX^\sigma$ thus behaves like a rate $N^2$ continuous time random walk on $\Z^2$. 
We denote $\bfH^\sigma(S)$ the hitting time of the symmetrized set $S^\sigma$ by $ \bfX^\sigma$, and once again $\bfH_t^\sigma(S)=\bfH^\sigma(S)\wedge t$. 
The boundary $\partial B_N^\sigma=V_N^\sigma\cup  H_{l,N}^\sigma\cup H_{u,N}^\sigma$ is  represented in Figure \ref{fig:Exitsym}. 
Further note that we can couple $\bfX$ and $\bfX^\sigma$ in a way that for any set $S$ contained in the half plane above $\bar D$, 
\begin{equation}
\label{eq:tausigma2}
\bfH(S)=\bfH^\sigma(S^{\sigma}).
\end{equation}
To build this coupling, given $\bfX$, one simply has to replace with probability $1/2$, independently, each excursion performed by $\bfX$ away from $\bar D$ by its image by the symmetry $\sigma$.

We will always assume in what follows that $\bfX $ and $\bfX^\sigma$ are defined under that coupling, and not to burden the notations, still denote $\Prob_\bfk$ the corresponding distribution.

\subsection{Proof of Lemma \ref{lem:firstterm}}Before estimating the function $\psi_N$, we start estimating the correlations between sites $p+1$ and $k> N^{3/4}$ to obtain an upper bound on the quantity $c_N$ defined in \eqref{eq:DefcN}.
	\begin{lemma}
		\label{lem:cN}
		\[\limsup_{N\to\infty}c_N=\limsup_{N\to\infty}\sup_{\substack{t\in [0,T]\\
				\bfk\in V_N}}\abs{\varphi_N(t,\bfk)}=0.\]  
	\end{lemma}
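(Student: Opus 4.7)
The plan is to adapt the duality argument of Proposition \ref{prop:BoundCor}, now allowing the first coordinate of the dual 2D random walk to be absorbed at the cemetery states $\dd_j$ introduced in Section \ref{sec:rho}. First I would extend $\varphi_N(t,\bfk)$ to the augmented state space by setting $\varphi_N(t,\dd_j,l)=0$ for any $l\in\{p+2,\dots,N-1\}$ and any $j\in\lps$, since under $\Prob_{\nu_N}$ the reservoir variables attached to $\dd_j$ are independent of $\eta_l$ and have deterministic expectation $\alpha_j$. Combined with $\varphi_N(t,k,N)=0$ already imposed on $H_{u,N}$, this provides vanishing boundary values at all cemetery states.

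Next, since $L_{l,N}$ preserves polynomials of degrees one and two in $\eta$, elementary computations show that $\varphi_N(t,p+1,l)$ satisfies, for $l>N^{3/4}$, a discrete differential system analogous to \eqref{eq:eqcor1} in which the dual generator acts on the first coordinate as $\gend_R+\gend_C+\gend_{b,N}$ from Section \ref{sec:rho} and on the second coordinate as $\gend_{b,N}+\gend_{r,N}$, with the same diagonal reflection and source $m$ on $D_N$. Introducing the associated sped-up 2D random walk $\bfY=(\bfY_1,\bfY_2)$ and letting $\bfH_t^{\star}$ denote the hitting time (truncated at $t$) of the augmented boundary (cemetery states on either coordinate together with the bad diagonal region $\widetilde{D}_{N,\varepsilon}$ of \eqref{eq:DefDNe}), duality applied as in \eqref{eq:phiX2} yields
\begin{equation*}
|\varphi_N(t,p+1,l)|\leq \Prob_{(p+1,l)}\big[\bfY(\bfH_t^{\star})\in\widetilde{D}_{N,\varepsilon}\big]+N^2\E_{(p+1,l)}\left[\int_0^{\bfH_t^{\star}}\mathbf{1}_{\bfY(s)\in D_N\setminus\widetilde{D}_{N,\varepsilon}}\,m(t-s,\bfY(s))\,ds\right],
\end{equation*}
since the boundary values of $\varphi_N$ at every absorbing state other than the bad diagonal vanish.

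The source integral is controlled by Corollary \ref{cor:grad} together with the same diagonal-occupation-time estimate that will be used for Lemma \ref{lem:Secondterm}, yielding an $o_N(1)$ bound uniformly in $l$. The remaining probability of absorption at $\widetilde{D}_{N,\varepsilon}$ is controlled via the excursion argument of Proposition \ref{lem:convrhop}: tracking the successive returns of $\bfY_1$ to the site $p+1$, each excursion into $\lps$ carries probability at least $\pi$, cf. \eqref{eq:Defpialpha1}, of ending at some $\dd_j$, and in a macroscopic time $t>0$ the number of such excursions is of order $tN$, so the probability that $\bfY_1$ has not been absorbed at a cemetery state before $\bfH_t^{\star}$ is exponentially small in $N$.

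The main obstacle will be that this excursion analysis concerns the marginal law of $\bfY_1$, which is coupled to $\bfY_2$ through the reflection at $\bar D$. To circumvent this, I would employ the symmetrization $\sigma$ introduced just after Lemma \ref{lem:thirdterm}: under the symmetric coupling $\bfY^\sigma$ the reflection at $\bar D$ is suppressed, so that up to the hitting time $\bfH^\sigma(\widetilde{D}_{N,\varepsilon}^\sigma)$ the marginal of $\bfY_1^\sigma$ coincides with the 1D dual random walk from Section \ref{sec:rho}, to which the excursion counting applies without modification. Combining the source estimate with this absorption bound then yields $c_N=o_N(1)$, concluding the proof of Lemma \ref{lem:cN}.
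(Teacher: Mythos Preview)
Your approach is more elaborate than necessary and the symmetrization step has a real gap. The symmetry $\sigma(k,l)=(l-1,k+1)$ intertwines the two coordinates; in the paper it works because $\gene$ treats both coordinates identically. Your walk $\bfY$ does not: the first coordinate carries the boundary generator $\gend_R+\gend_C$ and cemetery absorption, the second does not. Under the coupling with $\bfY^\sigma$, each excursion away from $\bar D$ is reflected with probability $1/2$, so after a crossing the first coordinate of $\bfY^\sigma$ is playing the role of the original second coordinate. Hence the marginal of $\bfY_1^\sigma$ is \emph{not} the 1D dual walk of Section~\ref{sec:rho}, and the excursion counting of Proposition~\ref{lem:convrhop} does not apply to it. Without this, you have no control on the probability that $\bfY_1$ is absorbed at a cemetery before the walk visits $\widetilde{D}_{N,\varepsilon}$ or accumulates diagonal time.

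The paper's route sidesteps the whole issue. Starting from $(p+1,l)$ with $l>N^{3/4}$, one stops the 2D dual walk at $\bfH=\bfH(\dd)\wedge\bfH(\bar D_p)$ where $\bar D_p=\{\abs{\bfk}=p+1\}$. Since the coordinates stay at distance larger than $p+1>1$ up to time $\bfH$, the walk never touches $\bar D$, so there is no diagonal reflection and no source term $m$ to control at all. Moreover, up to $\bfH$ the two coordinates evolve as \emph{independent} copies of the 1D dual walk driven by $\gend_N$; the excursion argument of Proposition~\ref{lem:convrhop} then applies directly to the first coordinate and shows it hits a cemetery in time $O(N^{-3/4})$, while the second coordinate needs that long to close even a fraction of the initial gap $N^{3/4}$. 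This yields $\Prob[\bfH(\dd)>\bfH(\bar D_p)]\to 0$ uniformly, and since $\varphi_N$ vanishes at the cemetery states the lemma follows. You should replace your stopping set $\widetilde{D}_{N,\varepsilon}$ by the coarser $\bar D_p$; the machinery of Lemmas~\ref{lem:Secondterm} and~\ref{lem:thirdterm} is then unnecessary here.
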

Before proving this Lemma, we show that it implies Lemma \ref{lem:firstterm}.
Since $\psi_N$ is solution to \eqref{eq:Systempsi}, we can write for any $(t,\bfk)\in [0,T]\times B_N$ 
\[\psi_N(t,\bfk)=\E_{\bfk}\Big[\psi_N\big(t-\bfH_t(\partial B_N),\bfX(\bfH_t(\partial B_N))\big)\Big].\]
Thanks to Lemma \ref{lem:cN}, $\psi_N$ vanishes uniformly in space and time at the vertical boundary $V_N$. Furthermore, $\psi_N$ also vanishes at time $0$ and at the upper boundary $H_{u,N}$. Therefore the only boundary where $\psi_N$ does not ultimately vanish is $H_{l,N}$. Using the coupling between $\bfX$ and $\bfX^\sigma$, and the symmetry  identity \eqref{eq:tausigma2}, we can write for any $t$ and any $\bfk\in B_N$
\begin{align}
\label{eq:estpsitk}
\psi_N(t,\bfk)\leq&\;\Prob_{\bfk}\Big[\bfH_t(\partial B_N)=\bfH(H_{l,N})\Big]+c_N\Prob_{\bfk}\Big[\bfH_t(\partial B_N)=\bfH(V_N)\Big]\nonumber\\
\leq &\;\Prob_{\bfk}\Big[\bfH^\sigma_t(\partial B_N^\sigma)=\bfH^\sigma(H^\sigma_{l,N})\Big]+c_N\nonumber\\
\leq &\; \Prob_{\bfk}\Big[\bfH^\sigma(H^\sigma_{l,N})<\bfH^\sigma(V^\sigma_N)\vee\bfH^\sigma(H^\sigma_{u,N})\Big]+c_N\nonumber\\
\leq &\;\Prob_{\bfk}\Big[\bfH^\sigma(\partial E_{N^{7/8}})<\bfH^\sigma(\partial E_{2N})\Big]+c_N,
\end{align}
where for any integer $K$, $\partial E_K$ is the boundary of the box of side $2K$, centered at $\0$ 
\[\partial E_K=\{\bfk\in \Z^2,\quad  \norm{\bfk}=K\}.\]
The last bound is justified in Figure \ref{fig:Exitsym}, where it is shown that if $\bfX^{\sigma}$ starts from $B_N$ and if $\bfH(H^\sigma_{l,N})<\bfH(V^\sigma_N)\vee\bfH(H^\sigma_{u,N})$, then $\bfX^{\sigma}$ reaches $\partial E_{N^{7/8}}$ before $\partial E_{2N}$.

\begin{figure}
	\input{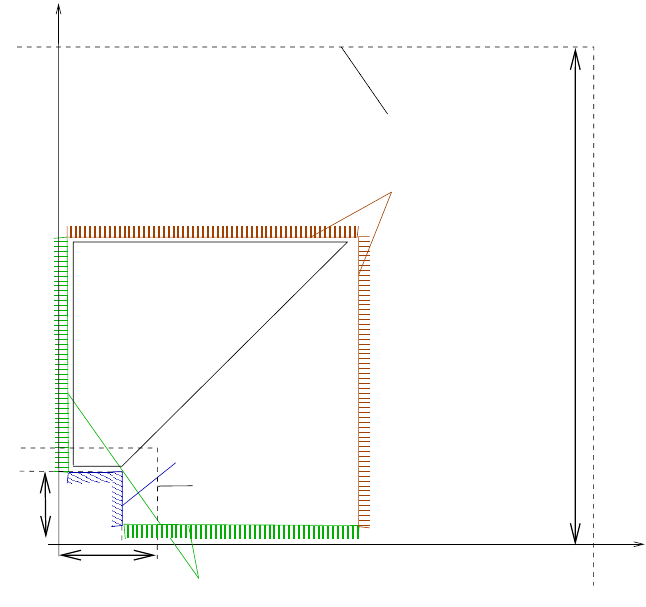tex_t}
	\caption{Representation of the sets $H_{l,N}^{\sigma} $, $V_N^\sigma$, and $H_{u,N}^\sigma$. Starting from $ B_N \cap E^c_{N^{7/8}}$, in order to leave the area delimited by the hatched boundary at  $H_{l,N}^{\sigma} $, a random walk must hit $\partial E_{N^{7/8}}$  before hitting $ H_{l,N}^{\sigma} $. Colour figure online.}
	\label{fig:Exitsym}
\end{figure}

Furthermore, (cf. Exercise 1.6.8 in \cite{Lawler1996})
\begin{equation}
\label{eq:p2N78N}
\sup_{\substack{\bfk\in B_N\\		\norm{\bfk}>\delta_N}}\Prob_{\bfk}(\bfH(\partial E_{N^{7/8}})<\bfH(\partial E_{2N}))\sim\frac{\log 2N-\log{\delta N}}{\log 2N-\log N^{7/8}}=\frac{8\log\delta}{8\log2 +\log N},
\end{equation}
which vanishes as $N\to\infty$ for any fixed $\delta$, as wanted. We now only need to combine equations \eqref{eq:estpsitk} and \eqref{eq:p2N78N}, and Lemma \ref{lem:cN}, to prove Lemma \ref{lem:firstterm}.

\bigskip

We now prove Lemma \ref{lem:cN}.
\begin{proof}[Proof of Lemma \ref{lem:cN}]
In order to prove this Lemma, consider two random walks $X_1$ and $X_2$, respectively started from $p+1$ and $N^{3/4}<k<N$, and both driven by the dual generator $\gend_N$ defined after \eqref{eq:DefLCd}. We are going to prove that before these two particles get close to each other, $X_1$ will have reached one of the cemetery states $\dd_1,\dots,\dd_p$ with high probability. Let us denote 
\[\bar D_p=\{\bfk\in \Z^2, \quad \abs{\bfk}=p+1\}\]
Let $\bfH(\bar D_p)$ be the first time these random walks are at a distance $p+1$, 
\[\bfH(\bar D_p)=\inf\{t\geq0,\quad (X_1(t),X_2(t))\in \bar D_p\}.\]
Let us define $x_{1,N}=N^{3/4}/4-1$ and $x_{2,N}=3N^{3/4}/4+1$
and let $H^1_N, H^2_N$ be defined as 
\[H^i_N=\inf\left\{t\geq 0, \quad X_i(s)=x_{i,N}\right\}.\]
Note in particular that in order for $\bfX:=(X_1(t),X_2(t))\in \bar D_p$ to occur, either $X_1$ must have reached $x_{1,N}$ or $X_2$ must have reached $x_{2,N}$, so that 
\begin{equation}
\label{eq:tau1tau2}
\bfH(\bar D_p)>H_N^1\wedge H_N^2.
\end{equation}
Finally, denote  
\begin{equation*}
\bfH(\dd)= H^{X_1}(\partial \Lambda_N\setminus\{N\}),
\end{equation*}
the hitting time of one of the cemetery states $\dd_k$ by the first coordinate of the random walk $\bfX$.

\bigskip

Let us denote by $\widetilde{\Prob}_{k}$ the joint distribution of $X_1$, $X_2$, moving independently, where the first is started from $p+1$ and the second from $k\geq N^{3/4}$.
We claim that $X_1$ reaches one of the cemetery states before the two random walks get close to each other, or 
\begin{equation}
\label{eq:noencounter}
\limsup_{N\to\infty}\sup_{N^{3/4}\leq k\leq N}\widetilde{\Prob}_{k}\Big[\bfH(\dd)>\bfH(\bar D_p)\Big]=0.
\end{equation}

\bigskip

Denote $s_N=N^{-3/4}$. For any $N^{3/4}\leq k\leq N$, equation \eqref{eq:tau1tau2} yields
\begin{equation}
\label{eq:pX1X2dd}
\widetilde{\Prob}_{k}\Big[\bfH(\dd)>\bfH(\bar D_p)\Big]\leq \widetilde{\Prob}_{k}\Big[\bfH(\dd)\geq s_N\Big]+ \widetilde{\Prob}_{k}\Big[\bfH(\dd)\geq  H^1_N\Big]+\widetilde{\Prob}_{k}\Big[H_N^2\leq s_N\Big].
\end{equation}
Since $X_1$ and $X_2$ behave as random walkers until they are in $\lps$,  the last term is less than the probability that a rate $N^2$ symmetric random walk on $\Z$ travels a distance of order $N^{3/4}$ before time $s_N$. However, because of the acceleration in $N^2$, in a time $s_N$,  $X_2$ would typically travel a distance of order $\sqrt{s_NN^2}=N^{5/8}=N^{3/4}N^{-1/8}.$ Elementary computations and a large deviations estimate therefore yields that for some constant $C$ independent of $N$ and $k\geq N^{3/4}$
\[\widetilde{\Prob}_{k}(H_N^2\leq s_N)\leq e^{-CN^{1/8}}.\]
Furthermore, using minimal adaptations of equations \eqref{eq:boundt} and \eqref{eq:boundN}, we obtain 
\begin{equation*}
\widetilde{\Prob}_{k}(\bfH(\dd)\geq s_N)\leq  \frac{K(\log N)^2}{s_NN},
\end{equation*}
where $K$ is a constant depending on $C$ and $\pi$, and 
\begin{equation*}
\widetilde{\Prob}_{k}(\bfH(\dd)\geq H_N^1)\leq\frac{2}{\pi(x_{1,N}-1-p)}.
\end{equation*}

These three bounds and equation \eqref{eq:pX1X2dd} yield that for any $N^{3/4}\leq k\leq N-1$, $\varepsilon <1/4$, and $N$ large enough, 
\[\widetilde{\Prob}_{k}\Big[\bfH(\dd)>\bfH(\bar D_p)\Big]\leq N^{-\varepsilon},\]
which proves equation \eqref{eq:noencounter}.

\bigskip

We now get back to estimating the correlations and proving Lemma \ref{lem:cN}. In order not to introduce burdensome notations, we will not write in full detail this part of the proof, which relies once again on duality. As we did to estimate the density, we pair $\varphi_N$ with a  two-dimensional random walk $\bfX=(X_1,X_2)$ on $\bar \Lambda_N^2$. Let us shorten
\[\bfH=\bfH(\dd)\wedge \bfH(\bar D_p) \quad \mbox{ and } \quad \bfH_t=\bfH\wedge t .\]
If $\bfX$ reaches either time $t$ or one of the cemetery states $(\dd_k,l)$, with $l>p+1$ not in the boundary, $\varphi_N$ vanishes. Furthermore, since before $\tau$, we have $\abs{X_1- X_2}>p+1$, $X_1$ and $X_2$ are distributed at least until $\tau$ as independent random walks with generator $N^2\gend_N$. We can therefore write, noting that $\bfX$ cannot reach the diagonal $\bar D$ before time $\tau$, and since $\abs{\varphi_N}$ is less than $1$ and vanishes at time $0$,
\begin{multline*}
\varphi_N\pa{t,p+1,k}=\E_{(p+1,k)}(\varphi_N(t-\bfH_t, \bfX(\bfH_t))\\
\leq\widetilde{\Prob}_{k}(\bfH(\dd)>\bfH(\bar D_p))+\E_{(p+1,k)}(\underset{=0}{\underbrace{\varphi_N(t-\bfH, \bfX(\bfH))\ind{\bfH(\dd)\leq\bfH(\bar D_p)}}}),
\end{multline*} 
so that \eqref{eq:noencounter} concludes the proof of Lemma \ref{lem:cN}.
\end{proof}

\subsection{Proof of Lemma \ref{lem:Secondterm}} We now estimate the overall contribution of the diagonal increments to $\varphi_N$. Recall that we want to estimate
	\[N^2\E_{\bfk}\pa{\int_{s=0}^{\bfH_t^{\varepsilon}}\ind{\bfX(s)\in D_N}m(t-s,\bfX(s))ds},\]
	where we shortened $\bfH_t^{\varepsilon}=\bfH_t(\partial B_N\cup \widetilde{D}_{N,\varepsilon})$ and $\widetilde{D}_{N, \varepsilon}$ is the set of points -defined in \eqref{eq:DefDNe}- of the diagonal $D_N$ at distance at most $N^{1-\varepsilon/4}$ of its extremities. 
By definition of $\bfH_t^{\varepsilon}$, for any $s\in [0,\bfH_t^{\varepsilon})$, we cannot have $\bfX\in \widetilde{D}_{N,\varepsilon}$. In particular, for $\varepsilon>0$ given by Corollary \ref{cor:grad}, $\bfk\in B_N$,
	\begin{multline*}N^2\E_{\bfk}\pa{\int_{s=0}^{\bfH_t^{\varepsilon}}\ind{\bfX(s)\in D_N}m(t-s,\bfX(s))ds}\\
	\leq  M_0N^{1-2\varepsilon}\E_{\bfk}\pa{\int_{s=0}^{\bfH_t^{\varepsilon}}\ind{\bfX(s)\in D_N}ds}.\end{multline*}
	Now that the problem of controlling $m$ is dealt with, we can get back to the real stopping time $\bfH_t(\partial B_N)$ (which is by definition larger than $\bfH_t^{\varepsilon}$) and write 
 \begin{multline}\label{eq:TpsDN}N^2\E_{\bfk}\pa{\int_{s=0}^{\bfH_t^{\varepsilon}}\ind{\bfX(s)\in D_N}m(t-s,\bfX(s))ds}\\
 \leq  M_0N^{1-2\varepsilon}\E_{\bfk}\pa{\int_{s=0}^{\bfH_t(\partial B_N)}\ind{\bfX(s)\in D_N}ds}.\end{multline}

	In order to simplify the problem, we start by making it symmetric w.r.t. the line $\bar D=\{(k,k+1),\quad k\in \Z\}$. To do so, we use once again the random walk $\bfX^\sigma$ introduced earlier, with generator given by \eqref{eq:genebfXsigma}. By construction, $\bfX$ and $\bfX^{\sigma}$ spend the same time in $D_N$, therefore
	\[\int_{s=0}^{\bfH_t(\partial B_N)}\ind{\bfX(s)\in D_N}ds=\int_{s=0}^{\bfH^\sigma_t(\partial B_N^\sigma)}\ind{\bfX^{\sigma}(s)\in D_N}ds,\]
where 
$\bfH^\sigma_t(\partial B_N^\sigma)$ was introduced just before \eqref{eq:tausigma2}. 	Recall that $\partial E_{2N}$ is the set of vertices $\bfk$ such that $\norm{\bfk}=2N,$ and that $ \bfH (\partial E_{2N})$ is the first time $\bfX$ hits the boundary $\partial E_{2N}$. Assuming that $\bfX^\sigma$ starts in $B_N$, we can write according to Figure \ref{fig:Exitsym} that $\bfH^\sigma_t(\partial B_N^\sigma)\leq \bfH^\sigma(\partial B_N^\sigma)<\bfH^\sigma(\partial E_{2N})$, so that 
\[\int_{s=0}^{\bfH_t(\partial B_N)}\ind{\bfX(s)\in D_N}ds\leq\int_{s=0}^{\bfH^\sigma(\partial E_{2N})}\ind{\bfX^{\sigma}(s)\in D_N}ds.\]
This last bound and equation \eqref{eq:TpsDN} finally yield that for any $N$ large enough
	\begin{multline*}N^2\E_{\bfk}\pa{\int_{s=0}^{\bfH_t(\partial B_N)}\ind{\bfX(s)\in D_N}m(t-s,\bfX(s))ds}\\
\leq M_0 N^{1-2\varepsilon}\E_{\bfk}\pa{\int_{s=0}^{\bfH^{\sigma}(\partial E_{2N})}\ind{\bfX^{\sigma}(s)\in D_N}ds},\end{multline*}
therefore Lemma \ref{lem:Secondterm} follows from Lemma \ref{lem:secondterm} below.

\begin{lemma}
	\label{lem:secondterm} For any $c>0$,
	\[\limsup_{N\to\infty}\sup_{\bfk\in B_N}N^{1-c}\E_{\bfk}\pa{\int_{s=0}^{\bfH^{\sigma}(\partial E_{2N})}\ind{\bfX^{\sigma}(s)\in D_N}ds}=0.\]  
\end{lemma}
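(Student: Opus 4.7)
My plan is to bound the expected occupation time of $D_N$ by $\bfX^\sigma$ before it leaves the box $E_{2N}$ via a standard Green's function estimate for the two-dimensional simple random walk. The key observation is that the embedded discrete-time skeleton $(Y_n)_{n\geq 0}$ of $\bfX^\sigma$ is exactly the simple random walk on $\Z^2$: indeed the generator $\gene^\sigma$ in \eqref{eq:genebfXsigma} prescribes jumps to each of the four neighbors at equal rate, at every site, and only the total jump rate depends on the location. On $\bar D$ this total rate equals $2N^2$ (so the mean holding time is $1/(2N^2)$), and off $\bar D$ it equals $4N^2$. Since $D_N\subset\bar D$, for any $\bfy\in D_N$ the expected time spent by $\bfX^\sigma$ at $\bfy$ before hitting $\partial E_{2N}$ equals $G_{2N}(\bfk,\bfy)/(2N^2)$, where $G_{2N}(\bfk,\bfy)$ denotes the expected number of visits of the discrete skeleton $Y$ to $\bfy$, starting from $\bfk$, before it first hits $\partial E_{2N}$.

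Second, I would invoke the classical potential-kernel estimate for simple random walk on $\Z^2$ (see e.g.\ Proposition 1.6.3 and Exercise 1.6.8 in \cite{Lawler1996}): there exists an absolute constant $C$ such that $G_R(\bfy,\bfy)\leq C\log R$ for every $R\geq 2$ and every $\bfy\in E_R$. Applying the strong Markov property at the first visit of $Y$ to $\bfy$ yields $G_R(\bfk,\bfy)\leq G_R(\bfy,\bfy)$, hence $G_{2N}(\bfk,\bfy)\leq C\log(2N)$ uniformly in $\bfk\in B_N$ and $\bfy\in D_N$.

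Third, since $\abs{D_N}\leq N$, summing over $\bfy\in D_N$ gives
\[\E_{\bfk}\pa{\int_{0}^{\bfH^{\sigma}(\partial E_{2N})}\ind{\bfX^{\sigma}(s)\in D_N}\,ds}\leq \frac{1}{2N^2}\sum_{\bfy\in D_N} G_{2N}(\bfk,\bfy)\leq \frac{C N \log(2N)}{2N^2}=O\pa{\frac{\log N}{N}},\]
uniformly in $\bfk\in B_N$. Multiplying by $N^{1-c}$ gives a bound of order $N^{-c}\log N$, which vanishes as $N\to\infty$ for any $c>0$, and this is exactly the claim. The only mildly delicate point I foresee is ensuring that the logarithmic control on $G_R(\bfk,\bfy)$ is uniform in the starting point $\bfk$; but since the strong Markov property reduces the general Green's function to the diagonal entry $G_R(\bfy,\bfy)$, this obstacle disappears and the argument is essentially immediate.
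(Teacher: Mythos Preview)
Your proof is correct and follows essentially the same route as the paper's: both pass to the discrete-time skeleton of $\bfX^\sigma$ (which is simple random walk on $\Z^2$), use the $1/(2N^2)$ mean holding time on $D_N\subset\bar D$, bound the Green's function $G_{2N}(\cdot,\cdot)$ uniformly by $C\log N$, and sum over $|D_N|\leq N$. The only cosmetic difference is that the paper bounds $G_{2N}(\bfk,\bfk')\leq G_{2N}(\bfk,\bfk)=1/p_{\bfk,N}$ via the maximum principle and then $p_{\bfk,N}\geq p_{\0,N}\geq K/\log N$, whereas you bound $G_{2N}(\bfk,\bfy)\leq G_{2N}(\bfy,\bfy)\leq C\log(2N)$ via the strong Markov property and the potential-kernel estimate directly.
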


\begin{proof}[Proof of Lemma \ref{lem:secondterm}]
	In order to simplify the problem, we introduce a discrete time random walk $(\bfZ_m)_{m\geq 0}$ on $\Z^2$, started from $\bfk$ as well, and performing the exact same jumps as $\bfX^{\sigma}$. Then, shortening $\bfH^\bfZ:=\bfH^\bfZ(\partial E_{2N})$ the (discrete) time at which $\bfZ$ reaches the boundary $\partial E_{2N}$, and since the waiting time of $\bfX^{\sigma}$ at any site in $D_N\subset \bar D$ has distribution $Exp(2N^2)$ (4 neighbors, each jumped to at rate $N^2/2$), we can write
	\begin{equation}
	\label{eq:RWYZ}
	\E_{\bfk}\pa{\int_{s=0}^{\bfH^\sigma(\partial E_{2N})}\ind{\bfX^{\sigma}(s)\in D_N}ds}=\frac{1}{2N^2}\E_{\bfk}\pa{\sum_{m=1}^{\bfH^\bfZ}\ind{\bfZ_m\in D_N}}.
	\end{equation}
	Recall that $E_{2N}=\{\bfk\in \Z^2,\quad \norm{\bfk}<2N\}$ is the discrete box of size $2N$. Fix some $\bfk, \bfk'\in E_{2N}$, we now compute 
	\[\psi_{\bfk}(\bfk'):=\E_{\bfk}\pa{\sum_{m=1}^{\bfH^\bfZ}\ind{\bfZ_m=\bfk'}}.\] 
	Since $\bfZ$ performs a symmetric random walk until reaching $\partial E_{2N}$, $\psi_{\bfk}$ is solution to 
	\[\begin{cases}
	(\boldsymbol\Delta_{N}\phi(\bfk')=0 & \forall \bfk'\in E_{2N}\setminus \{\bfk\}\\
	\phi(\bfk')=0&\forall \bfk'\in \partial E_{2N}\\
	\phi(\bfk)=1/p_{\bfk,N}&
	\end{cases},\]
	where $p_{\bfk,N}$ is the probability for $\bfZ$, starting from $\bfk$ to reach $\partial E_{2N}$ without coming back to $\bfk$, which is also, starting from $\bfk$, the expectation of the number of passages in $\bfk$ before reaching the boundary $\partial E_{2N}$. 
	By maximum principle, we can now crudely bound $\psi_{\bfk}(\bfk')$, uniformly in $\bfk'$, by $1/p_{\bfk,N}$, so that the right hand side in \eqref{eq:RWYZ} is bounded from above for any $\bfk$ by
	\[\frac{1}{2N^2}\sum_{\bfk'\in D_N}\psi_\bfk(\bfk')\leq \frac{\# D_N}{2N^2 p_{\bfk,N}}\leq\frac{1}{2N p_{\0,N}}.\]
	The last holds due to the probability to leave $\bfk$ and never come back before reaching the boundary $\partial E_{2N}$ being smallest for $\bfk=\0$, and because the cardinal of $D_N$ is $N-2\leq N$. As $N$ goes to infinity, we have $p_{\0,N}\geq K/\log N,$ so that for any $N$ large enough,
	\[\sup_{\bfk\in B_N}N^{1-c}\E_{\bfk}\pa{\int_{s=0}^{\bfH^\sigma(\partial E_{2N})}\ind{\bfX^{\sigma}(s)\in D_N}ds}\leq \frac{\log N}{2KN^{c}},\]
	for some fixed constant $K$. This concludes the proof of the Lemma.
\end{proof}

\subsection{Proof of Lemma \ref{lem:thirdterm}} Recall that $\varepsilon$ and $\delta$ are fixed, small, positive constants, that
\begin{equation}
\label{eq:DefDNtilde}
\widetilde{D}_{N,\varepsilon}=\{(k,k+1)\in D_N, \quad k\leq N^{1-\varepsilon/4}\mbox{ or }k\geq N-N^{1-\varepsilon/4}\}.
\end{equation}
and that we want to prove that $\Prob_{\bfk}\Big[\bfH(\partial B_N)>\bfH(\widetilde{D}_{N, \varepsilon})\Big]$  vanishes, as $N\to\infty$, uniformly in $t\in [0,T]$ and $\bfk\in B_N$ such that $\norm{\bfk}$, $\norm{\bfN-\bfk}>\delta N$.

Once again, let us make our problem symmetric w.r.t the diagonal $D_N$, and recalling the notations introduced after Lemma \ref{lem:thirdterm}, write
\[\Prob_\bfk\Big[\bfH(\partial B_N)>\bfH(\widetilde{D}_{N, \varepsilon})\Big]=\Prob_\bfk\Big[\bfH^\sigma(\partial B_N^\sigma)>\bfH^\sigma(\widetilde{D}_{N, \varepsilon})\Big].\]
(Of course, since $\widetilde{D}_{N, \varepsilon}\subset \bar D$, we have $\widetilde{D}_{N, \varepsilon}^\sigma=\widetilde{D}_{N, \varepsilon}$) For any $\bfk, \ell\in \Z^2\times \N$, let us denote 
\[E_\ell(\bfk)= \{\bfk'\in \Z^2, \norm{\bfk-\bfk'}\leq \ell\},\]
\[\partial E_\ell(\bfk)= \{\bfk'\in \Z^2, \norm{\bfk-\bfk'}= \ell\}.\]
Then, letting $\ell_N=2N^{1-\varepsilon/4}$,  we have \[\widetilde{D}_{N,\varepsilon}\subset E_{\ell_N}(\0)\cup E_{\ell_N}(\bfN),\]
	therefore by union bound,
\begin{multline}
\label{eq:unionbound}
\Prob_{\bfk}\Big[\bfH(\partial B_N)>\bfH(\widetilde{D}_{N, \varepsilon})\Big]\\
\leq\Prob_\bfk\Big[\bfH^\sigma(E_{\ell_N}(\0))<\bfH^\sigma(\partial B_N^\sigma)\Big]+\Prob_\bfk\Big[\bfH^\sigma(E_{\ell_N}(\bfN))<\bfH^\sigma(\partial B_N^\sigma)\Big].
\end{multline}
Since we assume both $\norm{\bfk}$ and $\norm{\bfN-\bfk}$ to be larger than $\delta N$, both of the probabilities on the right hand side are estimated in the same way, so that we will only estimate the first one. To do so, simply note that for any $\bfk\in B_N$, $\bfH^\sigma(\partial B_N^\sigma)\leq \bfH^\sigma(\partial E_{2N})$, so that
\[\Prob_\bfk\Big[\bfH^\sigma(E_{\ell_N}(\0))<\bfH^\sigma(\partial B_N^\sigma)\Big]\leq \Prob_\bfk\Big[\bfH^\sigma(E_{\ell_N}(\0))<\bfH^\sigma(\partial E_{2N}^\sigma(\0))\Big].\]
The left hand side above can be written as \[\frac{\log(\norm{\bfk})-\log(2N)}{\log(\ell_N)-\log(2N)}+o_N(1),\]
where the $o_N(1)$ vanishes uniformly in $\bfk$. In particular, 
\[\sup_{\substack{\bfk\in B_N\\ 
\norm{\bfk}>\delta N}}\Prob_{\bfk}\Big[\bfH^\sigma(E_{\ell_N}(\0))<\bfH^\sigma(\partial B_N^\sigma)\Big]\leq \frac{\varepsilon}{4}\pa{\frac{ \log 2 -\log\delta }{\log N}}.\]
We obtain similarly
\[\sup_{\substack{\bfk\in B_N\\ 
\norm{\bfN-\bfk}>\delta N}}\Prob_{\bfk}\Big[\bfH^\sigma(E_{\ell_N}(\bfN))<\bfH^\sigma(\partial B_N^\sigma)\Big]\leq \frac{\varepsilon}{4}\pa{\frac{ \log 2 -\log\delta }{\log N}}.\]
Together with \eqref{eq:unionbound}, these two bounds conclude the proof of Lemma \ref{lem:thirdterm}.

\section{Proof of Theorem \ref{thm:HydroGen}} 
\label{sec:Hydro}
We now have all the tools needed to prove the hydrodynamic limit. Fix a continuous function $G: [0,1]\to \bb R$, and $t\in [0,T]$. Then, Using triangular and Cauchy Schwarz inequalities, we can estimate the square of the quantity inside the expectation in Theorem \ref{thm:HydroGen} by
\begin{multline*}
\Bigg( \frac 1N
\sum_{k\in \Lambda_N} G(k/N) \, [\eta_k(t) - \bar \rho(t,k/N)] \Bigg)^2\\
\leq
\frac{C(p)\norm{G}_{\infty}}{N}+\frac{2}{N^2}\pa{
\sum_{k=p+2}^{N-1} G(k/N) \big[\eta_k(t) -  \rho_N(t,k)\big] }^2\\
+\frac{2}{N^2}\pa{\sum_{k=p+2}^{N-1} G(k/N) \, \big[\rho_N(t,k) - \bar \rho(t,k/N)\big] }^2\\ 
\leq \frac{C(p)\norm{G}_{\infty}}{N}+
 \frac 2{N^2}
\sum_{k,l=p+2}^{N-1} G(k/N)G(l/N) \varphi_N(t,k,l)\\
+2\norm{G}_{\infty}^2 \frac 1N
\sum_{k=p+2}^{N-1} \big[\rho_N(t,k) - \bar \rho(t,k/N)\big]^2. 
\end{multline*}
For any positive $\delta$, the first sum on the right-hand side is less than 
\begin{multline*}
\sum_{\substack{\bfk\in B_N\\
\norm{\bfk}, \norm{\bfN-\bfk}>\delta N}}\frac{2\norm{G}_{\infty}^2\abs{\varphi_N(t,\bfk)}}{N^2}\\
+\frac{2\norm{G}_{\infty}^2}{N^2}\#\Big\{\bfk\in \{p+1,...,N-1\}^2\quad \norm{\bfk}\wedge\norm{\bfN-\bfk}\leq \delta N\Big\}.\end{multline*}
The first term vanishes as $N\to \infty$ for any $\delta>0$ according to Proposition \ref{prop:BoundCor}, whereas the second converges as $N\to\infty$ to $C\delta^2$ for some constant $C$. We then let $\delta\to0$, so that Theorem \ref{thm:HydroGen} follows from Lemma  \ref{lem:Hydro2} below.

\begin{lemma}
	\label{lem:Hydro2}
	For any $t\in [0,T]$
	\[\limsup_{N\to\infty}\frac 1N
	\sum_{k=p+2}^{N-1} (\rho_N(t,k) - \bar \rho(t,k/N))^2 =0.\]  
\end{lemma}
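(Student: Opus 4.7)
Fix $t\in (0,T]$; the case $t=0$ is trivial since $\rho_N(0,k)=\bar\rho(0,k/N)=\rho_0(k/N)$. The plan is to establish the stronger statement
\[\limsup_{N\to\infty}\sup_{k\in \{p+2,\ldots,N-1\}}\big|\rho_N(t,k)-\bar\rho(t,k/N)\big|=0,\]
from which Lemma \ref{lem:Hydro2} follows since $\frac1N\sum_k w^2\leq \|w\|_\infty^2$. To this end, introduce an auxiliary profile $\rho_N^*$ on $\{p+1,\ldots,N-1\}$ solving the discrete heat equation $\partial_t \rho_N^* = N^2\Delta_N \rho_N^*$ on $\{p+2,\ldots,N-2\}$ with the \emph{exact} Dirichlet boundary values $\rho_N^*(\tau,p+1)=\alpha$, $\rho_N^*(\tau,N-1)=\beta$ and initial datum $\rho_0(k/N)$. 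By the triangle inequality, it suffices to control $\rho_N-\rho_N^*$ and $\rho_N^*-\bar\rho(\cdot/N)$ separately.

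\textbf{Step 1: boundary correction.} The difference $u_N:=\rho_N-\rho_N^*$ satisfies the same discrete heat equation on $\{p+2,\ldots,N-2\}$ with vanishing initial datum and boundary data $u_N(\tau,p+1)=\rho_N(\tau,p+1)-\alpha$, $u_N(\tau,N-1)=\rho_N(\tau,N-1)-\beta$. Let $Y$ be the continuous-time SRW on $\{p+1,\ldots,N-1\}$, sped up by $N^2$ and absorbed at $\{p+1,N-1\}$. By duality, $u_N(t,k)=\E_k[u_N(t-H^Y_t,Y(H^Y_t))\mathbf{1}_{H^Y<t}]$ (the initial-datum contribution is null). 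Splitting the expectation on $\{H^Y\leq t-s\}$ versus $\{t-s<H^Y<t\}$, I bound the first term by $O(N^{\varepsilon-1}/s)$ using Proposition \ref{lem:convrhop}, and the second by $O(s/t^{3/2}+1/(N\sqrt{t-s}))$ using Lemma \ref{lem:BoundaryProba1}, which applies after coupling $Y$ with a random walk on $\Z$ sharing the same hitting time of $\{p+1,N-1\}$. Choosing $s=s_N\to 0$ with $N^{\varepsilon-1}/s_N\to 0$ (e.g.\ $s_N=N^{-(1-\varepsilon)/2}$), both contributions vanish uniformly, so $\|u_N(t,\cdot)\|_\infty\to 0$.

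\textbf{Step 2: discrete-to-continuous.} Since $\rho_0\in C^2$, parabolic regularity ensures $\bar\rho\in C^\infty((0,T]\times[0,1])$ with derivatives uniformly bounded on $[s,T]\times[0,1]$ for any $s>0$. Setting $v_N(t,k):=\rho_N^*(t,k)-\bar\rho(t,k/N)$, a Taylor expansion gives $\partial_t v_N=N^2\Delta_N v_N+O_s(1/N^2)$ on the interior, with boundary values $|v_N(\tau,p+1)|=|\alpha-\bar\rho(\tau,(p+1)/N)|=O_s(1/N)$ by smoothness of $\bar\rho(\tau,\cdot)$ and $\bar\rho(\tau,0)=\alpha$, similarly at $N-1$, and vanishing initial data on the interior. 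The discrete maximum principle yields $\|v_N(t,\cdot)\|_\infty\leq \|v_N(s,\cdot)\|_\infty+o_N(1)$ for $t\in[s,T]$. To establish $\|v_N(s,\cdot)\|_\infty\to 0$, I represent both $\rho_N^*(s,k)$ and $\bar\rho(s,k/N)$ via Feynman--Kac (w.r.t.\ $Y$ and Brownian motion on $[0,1]$ absorbed at the endpoints) and invoke Donsker's invariance principle, which yields uniform convergence on every compact subset $k/N\in[\delta,1-\delta]$. Near the spatial boundary, heat-kernel estimates place both $\rho_N^*(s,k)$ and $\bar\rho(s,k/N)$ within $o_\delta(1)$ of $\alpha$ (respectively $\beta$), so letting $\delta\to 0$ after $N\to\infty$ gives the claimed uniform vanishing.

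\textbf{Main obstacle.} The delicate point is the parabolic boundary layer of $\bar\rho$ at the corners $(0,0)$ and $(0,1)$ when $\rho_0(0)\neq\alpha$ or $\rho_0(1)\neq\beta$: the solution is discontinuous there, so a direct maximum-principle argument on $[0,T]$ fails. Splitting into the two steps above isolates the issue—Step 1 absorbs the nonstandard left-boundary geometry via Proposition \ref{lem:convrhop}, while Step 2 is a clean discrete-to-continuous heat equation comparison, with the short-time window $[0,s]$ handled by combining the smoothness of $\bar\rho$ at $t=s$ with Donsker's theorem. The choice of $s_N$ must simultaneously outpace the $N^{\varepsilon-1}$ error in Proposition \ref{lem:convrhop} and keep the boundary-layer probability from Lemma \ref{lem:BoundaryProba1} vanishing.
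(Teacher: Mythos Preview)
Your argument is correct and establishes the stronger $L^\infty$ statement, but it proceeds quite differently from the paper. The paper works directly with $\theta=\rho_N-\bar\rho$ and uses an $L^2$ energy estimate: it differentiates $\frac1N\sum_k\theta^2$ in time, integrates by parts to obtain a negative Dirichlet form plus two boundary terms of the shape $N\,\theta(t,p+1)\{\theta(t,p+2)-\theta(t,p+1)\}$, and controls each boundary term by combining Proposition~\ref{lem:convrhop} (which gives $\theta(t,p+1)=o(N^{-1/2})$) with the corresponding gradient estimate (giving the difference $=o(N^{-1/2-\delta})$). This reduces the lemma on $[\varepsilon,T]$ to its value at time $\varepsilon$, which is then handled by a separate short-time lemma showing that both $\rho_N(\varepsilon,k)$ and $\bar\rho(\varepsilon,k/N)$ are close to $\rho_0(k/N)$ away from a boundary layer of width $\varepsilon^{1/4}$---no Donsker needed. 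Your route instead inserts the intermediate profile $\rho_N^*$ with exact Dirichlet data and splits the error: Step~1 is a clean maximum-principle/duality estimate using Proposition~\ref{lem:convrhop} and Lemma~\ref{lem:BoundaryProba1}, and Step~2 invokes Donsker plus a boundary-layer argument. Your approach yields the stronger uniform bound and avoids the somewhat delicate product estimate on the boundary terms; the paper's energy method is more self-contained (it does not need the invariance principle) and is the natural $L^2$ companion to the PDE framework. One minor remark: it is slightly cleaner to place the right Dirichlet condition for $\rho_N^*$ at site $N$ rather than $N-1$, matching the domain on which $\rho_N$ itself satisfies the discrete heat equation; as written you are implicitly restricting $\rho_N$ to $\{p+2,\dots,N-2\}$ and treating $\rho_N(\tau,N-1)$ as boundary data, which is fine since Proposition~\ref{lem:convrhop} also controls $|\rho_N(\tau,N-1)-\beta|$.
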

\begin{proof}[Proof of Lemma \ref{lem:Hydro2}]To estimate the quantity above, we compute its time derivative
\begin{multline*}\partial_t\frac 1N \sum_{k=p+2}^{N-1}\pa{\rho_N(t,k) - \bar \rho(t,k/N)}^2\\
=\frac 2{N} \sum_{k=p+2}^{N-1}(\rho_N(t,k)-\bar \rho(t,k/N))(N^2(\Delta_N\rho_N)(t,k) - (\Delta\bar \rho)(t,k/N)).\end{multline*}
Note that because the boundary conditions are not a priori respected by the initial profile $\rho_0$, the space derivative of $\bar{\rho}$ can diverge as $t\to 0$. However, Since $\bar{\rho}$ is smooth, for any $\varepsilon>0$, uniformly in $t\in[\varepsilon, T]$, we can write 
\[(\Delta\bar \rho)(t,k/N)=N^2(\bar{\rho}(t,(k+1)/N)+\bar{\rho}(t,(k-1)/N)-2\bar{\rho}(t,k/N))+C(\varepsilon)o_N(1),\]
where $C(\varepsilon)$ can diverge as $\varepsilon\to 0$.
For any $k\in \llbracket-1,N\rrbracket$, let us denote 
\[\theta(t,k)=\rho_N(t,k)-\bar \rho (t, k/N).\]
Thanks to the identity above, for any $t\in [\varepsilon,T]$ we obtain by  integration by parts
\begin{multline*}
\partial_t\frac 1N \sum_{k=p+2}^{N-1}\pa{\rho_N(t,k) - \bar \rho(t,k/N)}^2=2N \sum_{k=p+2}^{N-1}\theta(t,k)(\Delta_N\theta)(t,k)+C(\varepsilon)o_N(1)\\
=-2N\Bigg[ \sum_{k=p+1}^{N}\big\{\theta(t,k+1)-\theta(t,k)\big\}^2+\theta(t,N)\big\{\theta(t,N)-\theta(t,N-1)\big\}\\
-\theta(t,p+1)\big\{\theta(t,p+2)-\theta(t,p+1)\big\}\Bigg]+C(\varepsilon)o_N(1) 
\end{multline*}

The first sum above is negative, and does therefore not need to be controlled. Furthermore, for any $\varepsilon>0$, the last term vanishes as $N\to\infty$.  We now take a look at the two other terms. They are treated in the same way, so that we only consider the second. The general idea is that $\theta$ is at most $O(N^{-\frac 12})$, whereas the second factor is a gradient of order at least $o(N^{-\frac 12-\delta})$. More precisely, 
\begin{align*}
N^{\frac 12}\abs{\theta(t,p+1)}=&N^{\frac 12}\abs{\rho_N(t,p+1)-\bar \rho (t, (p+1)/N)}\\
\leq& N^{\frac 12}\abs{\rho_N(t,p+1)-\alpha}+N^{\frac 12}\abs{\alpha-\bar \rho (t, (p+1)/N)}.
\end{align*}
For any $\varepsilon>0$, the first term vanishes uniformly in $t\in [\varepsilon , T]$ according to Proposition \ref{lem:convrhop}. The second term vanishes as well, uniformly in $t\in [\varepsilon,T]$, because $\bar{\rho}$ is smooth. Similarly, for any $\delta>0$, 
\begin{align*}
N^{\frac 12+\delta}\abs{\theta(t,p+2)-\theta(t,p+1)}\leq& N^{\frac 12+\delta}\abs{\rho_N(t,p+2)-\rho_N(t,p+1)}\\
&+N^{\frac 12+\delta}\abs{\bar \rho (t, (p+2)/N)-\bar \rho (t, (p+1)/N)}.
\end{align*}
Once again, both terms vanish uniformly in $t\in [\varepsilon , T]$ according to Proposition \ref{lem:convrhop} and because $\bar{\rho}$ is smooth. The term $N\theta(t,N)(\theta(t,N)-\theta(t,N-1))$ is estimated in the same fashion.

Finally, for any $\varepsilon$, and any $t\in [\varepsilon,T]$,
\begin{multline*}
\partial_t\frac 1N \sum_{k=p+2}^{N-1}\pa{\rho_N(t,k) - \bar \rho(t,k/N)}^2\\
\leq -2N \sum_{k=p+1}^{N}\pa{\theta(t,k+1)-\theta(t,k)}^2+C(\varepsilon)o_N(1).
\end{multline*}
We can thus write, for any $\varepsilon>0$ and $t\in [\varepsilon,T]$
\[\frac 1N \sum_{k=p+2}^{N-1}\pa{\rho_N(t,k) - \bar \rho(t,k/N)}^2\leq \frac 1N \sum_{k=p+2}^{N-1}\pa{\rho_N(\varepsilon,k) - \bar \rho(\varepsilon,k/N)}^2 + C(\varepsilon)o_N(1).\]
Lemma \ref{lem:Hydro2} therefore follows from Lemma \ref{lem:Tpsepsilon} below.
\end{proof}

\begin{lemma} 
\label{lem:Tpsepsilon}
\begin{equation*}
\limsup_{\varepsilon\to 0}\limsup_{N\to\infty}\frac 1N \sum_{k=p+2}^{N-1}\pa{\rho_N(\varepsilon,k) - \bar \rho(\varepsilon,k/N)}^2=0.
\end{equation*} 
\end{lemma}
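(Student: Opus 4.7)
The plan is to split the sum into a thin boundary layer and a bulk region, and show that in the bulk both $\rho_N(\varepsilon,\cdot)$ and $\bar\rho(\varepsilon,\cdot/N)$ stay close to the initial profile $\rho_0(\cdot/N)$ when $\varepsilon$ is small. Fix an arbitrary $\eta\in(0,1/2)$. Since $|\rho_N|,|\bar\rho|\leq 1$ by the maximum principle, the contribution of sites $k$ with $k/N\notin[\eta,1-\eta]$ to the average is bounded by $2\eta+O(1/N)$. It therefore suffices to prove that
\[
\limsup_{\varepsilon\to 0}\limsup_{N\to\infty}\sup_{k/N\in[\eta,1-\eta]}\abs{\rho_N(\varepsilon,k)-\bar\rho(\varepsilon,k/N)}=0,
\]
for then the displayed $\limsup$ is at most $2\eta$ for every $\eta>0$, which yields the claim.

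By the triangle inequality I reduce the task to showing that both $|\rho_N(\varepsilon,k)-\rho_0(k/N)|$ and $|\bar\rho(\varepsilon,k/N)-\rho_0(k/N)|$ vanish, uniformly for $k/N\in[\eta,1-\eta]$, as $N\to\infty$ then $\varepsilon\to 0$. For $\bar\rho$ this is a standard parabolic regularity statement: since $\rho_0\in C^2([0,1])$ and boundary effects of the Dirichlet heat kernel are exponentially damped at macroscopic distance $\eta$ from the boundary, $\bar\rho(\varepsilon,\cdot)\to\rho_0(\cdot)$ uniformly on $[\eta,1-\eta]$ as $\varepsilon\to 0$. For $\rho_N$ I would use the duality representation \eqref{eq:dualrho},
\[
\rho_N(\varepsilon,k)=\E_k\cro{b_N\pa{\varepsilon-H_\varepsilon,X(H_\varepsilon)}},
\]
and introduce the event $A_\varepsilon=\{X(s)\in\{p+2,\dots,N-1\}\text{ for all }s\in[0,\varepsilon]\}$. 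On $A_\varepsilon$ the dual walk reaches neither the left boundary region $\{1,\dots,p+1\}$ nor the right cemetery $\{N\}$, it evolves as a symmetric nearest-neighbor random walk with jump rate $N^2$ in each direction, and $H_\varepsilon=\varepsilon$ with $b_N(0,X(\varepsilon))=\rho_0(X(\varepsilon)/N)$. A standard exponential martingale/Doob argument applied to this simple walk yields, uniformly in $N$ and in $k$ with $k/N\in[\eta,1-\eta]$, the estimate $\Prob_k[A_\varepsilon^c]\leq C\exp(-c\eta^2/\varepsilon)$. Combining with Jensen's inequality and the $C^2$ regularity of $\rho_0$, this gives
\[
\abs{\rho_N(\varepsilon,k)-\rho_0(k/N)}\leq\norm{\partial_u\rho_0}_\infty\sqrt{\varepsilon}+C'e^{-c\eta^2/\varepsilon},
\]
uniformly in $N$ and $k$ with $k/N\in[\eta,1-\eta]$, which tends to $0$ as $\varepsilon\to 0$.

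The main technical point is obtaining the concentration bound $\Prob_k[A_\varepsilon^c]\leq Ce^{-c\eta^2/\varepsilon}$ \emph{uniformly in $N$}. This works precisely because the dual walk is accelerated by $N^2$, so that the typical macroscopic displacement $\sqrt{\varepsilon N^2}/N=\sqrt\varepsilon$ depends only on $\varepsilon$ and not on $N$; the rescaled walk is a discrete approximation of a Brownian motion on $[0,1]$ whose concentration estimates are $N$-free. Note that, in contrast with the rest of the paper, no appeal to Proposition~\ref{lem:convrhop} or to the correlation estimates of Section~\ref{sec:phi} is needed here, because we are probing times so short that the bulk density has not yet had time to feel the boundary conditions: the initial profile dominates the behavior in the bulk, and the only thing to control is the boundary layer, which is handled by the crude a priori bound $|\rho_N|,|\bar\rho|\leq 1$.
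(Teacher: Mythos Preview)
Your proof is correct and follows essentially the same strategy as the paper's: split into a boundary layer plus a bulk region, and in the bulk use duality together with a random-walk concentration bound to show that both $\rho_N(\varepsilon,\cdot)$ and $\bar\rho(\varepsilon,\cdot)$ stay close to the initial profile $\rho_0$ for small $\varepsilon$. The only cosmetic difference is that the paper takes an $\varepsilon$-dependent boundary layer of width $\varepsilon^{1/4}$ (so that it shrinks automatically), whereas you fix $\eta$ and send $\eta\to0$ at the end; both choices work equally well.
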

\begin{proof}[Proof of Lemma \ref{lem:Tpsepsilon}]
For any $k\in \{p+1,N-1\}$ (resp. $u\in [0,1]$) let $\Prob_k$ (resp. $\widetilde{\Prob}_u$) be the distribution of a continuous time random walk $X$ on $\Z$ (resp. a standard Brownian motion $B$) started from $k$ (resp. from $u$), and jumping at rate $N^2$ to any of its neighbors. Let $\E_k$ (resp. $\widetilde{\E}_u$) denote the corresponding expectation. Fix $k\in \{p+2,N-1\}$, we write for any $\varepsilon $
\begin{equation}
 \label{eq:rho0X}
\rho_N(\varepsilon,k)=\E_k\Big[\rho_N(\varepsilon-H_\varepsilon,X(H_\varepsilon))\Big]
 \end{equation}
where we shortened $H_\varepsilon:=H(\{p+1, N-1\})\wedge\varepsilon$ and $H(\{p+1, N-1\})$ is  $X$'s hitting time of the boundary $\{p+1,N-1\}$. Similarly, 
\begin{equation}
 \label{eq:rho0B}
\bar \rho(\varepsilon,k/N)=\widetilde{\E}_{k/N}\Big[\rho_N(\varepsilon-\widetilde{H}_{\varepsilon},B(\widetilde{H}_{\varepsilon}))\Big],
 \end{equation}
where $\widetilde{H}_{\varepsilon}=\widetilde{H}(\{0,1\})\wedge \varepsilon$ and $\widetilde{H}(\{0,1\})$ is $B$'s hitting time of the boundary $\{0,1\}$.

\medskip

We now use both of these identities to prove that, at distance at least $ \varepsilon^{1/4}N$ from the boundary, the density is close to its initial value.
Fix $k$ such that 
\[p+2+\varepsilon^{1/4}N<k<N-1-\varepsilon^{1/4}N.\] 
Then, we obtain from \eqref{eq:rho0X}, since $\rho_N$ is bounded in absolute value by $1$,
\begin{multline*}
\abs{\rho_N(\varepsilon,k)-\rho_0(k/N)}\leq 2 \Prob_k\pa{\abs{X(H_\varepsilon)-k}\geq \varepsilon^{1/4}N}\\
+\sup_{\abs{k-k'}\leq \varepsilon^{1/4}N}\abs{\rho_0(k/N)-\rho_0(k'/N)}.
 \end{multline*}
In a time $\varepsilon$, $X$ would typically travel a distance $\sqrt{\varepsilon} N$, so that the first term is $O(e^{-\varepsilon^{-1/4}})$. Since $\rho_0$ is smooth, the second term is $O(\varepsilon^{1/4})$. Using this time equation \eqref{eq:rho0B}, we can write an analogous bound for $\bar \rho$ so that for any $\varepsilon^{1/4}\leq u\leq 1-\varepsilon^{1/4}$
\begin{multline*}
\abs{\bar \rho(\varepsilon,u)-\rho_0(u)}\leq 2 \widetilde{\Prob}_u\pa{|B(\widetilde{H}_{\varepsilon})-u|\geq \varepsilon^{1/4}}+\sup_{\abs{u-u'}\leq \varepsilon^{1/4}}\abs{\rho_0(u)-\rho_0(u')}\\
=O_{\varepsilon}(e^{-\varepsilon^{-1/4}})+O_{\varepsilon}(\varepsilon^{1/4}).
 \end{multline*}
We finally obtain for any $p+2+\varepsilon^{1/4}N<k<N-1-\varepsilon^{1/4}N$
\[\abs{\rho_N(\varepsilon,k)-\bar\rho(\varepsilon, k/N)}=o_{\varepsilon}(\varepsilon),\]
where the $o_{\varepsilon}(1)$ is uniform in $k$ and can be chosen independent of $N$. 

\medskip

Since for any $k$,
\[\abs{\rho_N(\varepsilon,k) - \bar \rho(\varepsilon,k/N)}\leq1,\]
we can now estimate 
\begin{multline*}\frac 1N \sum_{k=p+2}^{N-1}\pa{\rho_N(\varepsilon,k) - \bar \rho(\varepsilon,k/N)}^2\\
\leq \frac 1N \sum_{k=p+2+\varepsilon^{1/4}N}^{N-1-\varepsilon^{1/4}N}\pa{\rho_N(\varepsilon,k) - \bar \rho(\varepsilon,k/N)}^2+\frac{p+2}{N}+2\varepsilon^{1/4}=O_{\varepsilon}(1)+O_{N}(1),\end{multline*}
where $O_{\varepsilon}(1)$ does not depend on $N$, which proves Lemma \ref{lem:Tpsepsilon}.
\end{proof}

\end{document}